\def\widebar{\accentset{{\cc@style\underline{\mskip10mu}}}}
\def\Widebar{\accentset{{\cc@style\underline{\mskip13mu}}}}
\newtheorem{theorem}{Theorem}
\newtheorem{remark}{Remark}
\newtheorem{lemma}{Lemma}
\newtheorem{definition}{Definition}
\newtheorem{proposition}{Proposition}
\newtheorem{corollary}{Corollary}
\begin{document}

\captionsetup[figure]{labelfont={ }, name={Fig.}, labelsep=period} 
\pagestyle{empty}

\title{Distributed Sensing with Orthogonal Multiple Access: To code or not to Code?}
\author{Yunquan~Dong,~\IEEEmembership{Member,~IEEE}
\vspace{-8mm}

        \thanks{ Y. Dong is with the School of Electronic and Information Engineering,  Nanjing University of Information Science and Technology, Nanjing 210044, China. Y. Dong is also with the National Mobile Communications Research Laboratory, Southeast University, Nanjing 210096, China (e-mail: yunquandong@nuist.edu.cn).

        This work was supported by the National Natural Science Foundation of China (NSFC) under Grant 61701247, the Startup Foundation for Introducing Talent of NUIST under Grant 2243141701008, the open research fund of National Mobile Communications Research Laboratory, Southeast University, under grant No. 2020D09.
        }
        }


\maketitle
\thispagestyle{empty}

\begin{abstract}
We consider the estimation distortion of a distributed sensing system with finite number of sensor nodes, in which
    the nodes observe a common phenomenon and transmit their observations to a fusion center over orthogonal channels.
In particular, we investigate whether the coded scheme (separate source-channel coding) outperforms the uncoded scheme (joint source-channel coding) or not.
    To this end, we explicitly derive the estimation distortion of a coded heterogeneous sensing system with diverse node and channel configurations.
Based on this result, we  show that in a homogeneous sensing system with identical node and channel configurations,  the coded scheme outperforms the uncoded scheme if the number of nodes is $K=1$ or $K=2$.
    For homogenous sensing systems with $K\geq3$ nodes and general heterogeneous sensing systems, we also present explicit conditions for the coded scheme to perform better than the uncoded scheme.
Furthermore, we propose to minimize the estimation distortion of heterogeneous sensing systems with hybrid coding, i.e., some nodes use the coded scheme and other nodes use the uncoded scheme.
    To determine the optimal hybrid coding policy, we  develop three greedy algorithms, in which the pure greedy algorithm minimizes distortion greedily, the group greedy algorithm improves performance by using a group of potential sub-polices, and the sorted greedy algorithm reduces computational complexity by using a pre-solved iteration order.
Our numerical and Monte Carlo results show that the proposed algorithms closely approach the optimal policy in terms average estimation distortion.
\end{abstract}

\begin{keywords}
Distributed sensing, source-channel separation theorem, estimation distortion, greedy algorithms.
\end{keywords}

\section{Introduction}
Sensing the phenomenon of interest with spatially distributed sensors and estimating the phenomenon with a centralized fusion center is a fundamental application of wireless sensor networks (WSNs).
    This scenario is referred to as \textit{distributed sensing} and has wide applications in environment monitoring \cite{air-monitring}, industrial automation \cite{industrial-auto}, target surveillance and tracking \cite{uav-tracking}, and so on.
 In such sensing systems, one well known problem is whether the separate source-channel coding outperforms the joint source-channel coding or not.

For point-to-point channels, Shannon has proved that the separate source-channel coding is an optimal strategy \cite{sampling_theorem}.
    That is, we can consider the problem of source coding separately from the problem of channel coding and transmit a stationary ergodic source over a channel if and only if the entropy rate (or the rate-distortion function) of the source is less than the capacity of the channel.
In the network context, however, the separate source-channel coding is not optimal in general \cite{Gastpar-tocode-2003, Gastpar-conf-2003, Gastpar-uncode-opt-2008}.
    As shown in \cite{Gastpar-tocode-2003}, the joint source-channel coding strictly outperforms the separate channel-source coding in distributed sensing systems with coherent Gaussian multiple access.
In particular, the joint source-channel coding can be implemented through uncoded forwarding, i.e., the observations are amplified and forwarded to the fusion center without coding \cite{Gastpar-conf-2003}.
    Later, a multi-letter rate-distortion region was developed for distributed sensing systems with separate source-channel coding and orthogonal multiple access  \cite{Jinjun-oMAC-2007}.
For a two-node system, it was further shown that the joint source-channel coding is strictly inferior to the separate source-channel coding in terms of rate-distortion and power-distortion regions.

In this paper, we consider the estimation distortion of a distributed sensing system with orthogonal multiple access and finite number of nodes.
    To be specific, the observations made by the nodes are corrupted by Gaussian noises and are transmitted to the fusion center through orthogonal Gaussian channels.
With the received observations, the fusion center estimates the source signal using a best-linear-unbiased-estimator (BLUE) \cite{estimation_book}.
    Under the mean-squared error (MSE) measure, we evaluate the estimation distortion of the system with the following easy-implementing coding schemes:
     \begin{itemize}
       \item the \textit{coded scheme:} each node encodes its distorted observations with lossy source coding and then transmits the obtained index to the fusion center using ideal channel coding, i.e., employing the \textit{separate source-channel coding};
       \item the \textit{uncoded scheme:} each node amplifies its observations and then transmits them to the fusion center directly, i.e., using the \textit{joint source-channel coding};
       \item the \textit{hybrid coding:} some nodes use the coded scheme and the other nodes use the uncoded scheme.
     \end{itemize}

    Moreover, we consider the following two types of sensing systems:
    \begin{itemize}
      \item the \textit{heterogeneous sensing system} in which the nodes have different transmit powers and different observation noise powers while the channels have its own channel gains;
      \item the \textit{homogeneous sensing system} in which the nodes have the same observation noise power and transmit power while all the channels share the same channel gain and channel noise power.
    \end{itemize}

Furthermore, we propose three greedy algorithms to determine the optimal coding policy for heterogeneous sensing systems with hybrid coding.
    Namely, the \textit{pure greedy} algorithm which adds a local optimal node to the active node set in each iteration, the \textit{group greedy} algorithm which iteratively updates the coding policy based on a group of potential policies obtained in the previous iteration, and the \textit{sorted greedy} algorithm which determines the coding policy of each node according a pre-solved order.

\subsection{Main Contributions}
The main contributions of the paper are as follows.
\begin{enumerate}
  \item [1)] \textit{Fundamental limits:} We explicitly present the estimation distortion of heterogeneous sensing systems and homogeneous sensing systems with finite number of nodes, with the coded scheme or/and the uncoded scheme, respectively.
  \item [2)] \textit{Necessary and sufficient conditions:} We explicitly present the condition for the coded scheme to be optimal in homogeneous sensing systems.
        We show that when the number of nodes is small (e.g., $K=1, 2$), the coded scheme outperforms the uncoded scheme regardless of the observation signal-to-noise ratio (SNR) and the channel SNR of the system;
            for $K\geq3$, we show that the coded scheme is optimal in the low observation SNR regime and derive the corresponding boundary explicitly.
        For heterogeneous sensing systems, we also present a general condition for the coded scheme to be optimal.
  \item [3)] \textit{Efficient algorithms:} We propose three greedy algorithms to search the near optimal coding policy for heterogeneous sensing systems with hybrid coding.
        While the group greedy algorithm performs the best in terms of estimation distortion and policy error rate, the sorted greedy algorithm find its solution with the minimum computational complexity, and the pure greedy algorithm achieves a good trade-off between distortion performance and computational complexity.
\end{enumerate}

\subsection{Related Works}
There have been many works investigating how the multiple access scheme affects the estimation distortion of distributed sensing systems.
    In \cite{Gastpar-tocode-2003, Gastpar-conf-2003, Gastpar-uncode-opt-2008}, the nodes communicate with the fusion center through a coherent multiple access channel (MAC), i.e., all the nodes are synchronized at subcarrier-level and their transmitted signals are mixed naturally at the fusion center.
 In this case, it is concluded that the uncoded scheme outperforms the separate source-channel coding, and has the optimal asymptotic scaling behavior when the source signal is Gaussian distributed \cite{Gastpar-tocode-2003, Gastpar-conf-2003, Gastpar-uncode-opt-2008}, or satisfy a certain mean condition \cite{Liu-SSP-2005}.
    It is also noted that the coherent MAC shares a similar concept with the non-orthogonal multiple-access (NOMA), which has been extensively investigated in recent years \cite{Ding-Tutorial-2018, Ding-JSAC-2017}.
In particular, NOMA is a very promising technology in terms of throughput and latency, since many signaling overheads (e.g., require-to-send and clear-to-send) can be eliminated or reduced.
    Facilitated with the low-complexity multi-user detection technology,  therefore, NOMA will be a key enabler of the fifth generation (5G) cellular communication system, in which the limited time, frequency, and power resources need to be shared among a large number devices.
        In coherent MAC and NOMA based systems, however, the requirement on the subcarrier-level synchronization among sensor nodes presents serious challenge to their implementations.
When the full coordination among sensor nodes is unavailable, therefore, it is more convenient to communicate with the fusion center through orthogonal multiple access, such as time/frequency/code-division-multiple-access (TDMA/FDMA/CDMA) \cite{Jinjun-oMAC-2007, Xiao-Estimation-2006}.
    Later, a hybrid multiple-access scheme was proposed in \cite{Chung-TVT-2011}, which allows the nodes of a same group to share the same channel and the groups to use orthogonal channels.

The power allocation of distributed sensing systems also attracts many attentions  \cite{Chung-TVT-2011, Xiao-Estimation-2006, Xiao-Linear-2008,  Cui-Estimation-2007}.
    In \cite{Xiao-Estimation-2006},  the optimal power scheduling and the optimal quantization scheme were considered for a  sensing system with orthogonal MAC and decentralized quantization.
The authors further showed that the optimal power scheduling can be solved through convex optimizations and admits a  distributed
implementation if the MAC is coherent \cite{Xiao-Linear-2008}.
    The estimation outage probability, the estimation diversity, and the optimal power allocation were investigated for an uncoded sensing system with orthogonal MAC in \cite{Cui-Estimation-2007}.
From a node collaboration perspective, energy efficient message sharing (among neighboring nodes) schemes and node selection schemes were proposed in \cite{Sijia-collab-2016} and \cite{Sijia-selec-2016}, respectively.
    It was also shown in \cite{Fekri-WCNC-2005} and \cite{Fekri-SECON-2004} that by using the low density parity-check (LDPC) codes in distributed source coding and joint source-channel coding, the achievable rate region and energy efficiency can closely approach the Slepian-Wolf limit.
The power scheduling of distributed sensing systems has also been exhaustedly studied in scenarios of energy harvesting and wireless power transfer, in which the sensor nodes harvest energy from the ambient environment or from a radio energy source.
    Since the energy arrival processes are random, the instantaneous distortions of these systems are also random, and thus we can minimization the average MSE distortion of the system by using the Markov decision process (MDP) based stochastic control \cite{Dey-MDP-2015}, the Lyapunov optimization technique \cite{Zhou-TVT-2016}, or the semidefinite relaxation framework \cite{Mai-relex-2017}.

\subsection{Organizations}

This paper is organized as follows.
   Section~\ref{sec:2_model} presents our system model, coding model, and estimation model.
 We investigate the estimation distortions of heterogeneous sensing systems and homogeneous sensing systems in Section~\ref{sec:3_hetero} and Section  \ref{sec:4_homo}, respectively.
    In Section \ref{sec:4_homo}, we also present the conditions for the coded scheme to be optimal.
In Section \ref{sec:5_hybrid}, we explicitly present the condition for the coded scheme to be optimal in heterogeneous sensing systems.
    Moreover, we present the estimation distortion of heterogeneous sensing systems with hybrid coding and propose three efficient  algorithms to solve the near optimal coding policy.
    The obtained results are presented via numerical results and verified via Monte Carlo simulations in Section~\ref{sec:6_simulation}.
Finally,  we shall conclude our work in Section~\ref{sec:5_conclusion}.

\textit{Notations}: Boldface letters indicate vectors and matrices. $\boldsymbol{v}^{\text{T}}$ denotes the transpose of the vector.
    $\textbf{1}_K$ represents a $K$ dimensional vector of ones and $\textbf{I}_K$ represents a $K\times K$ dimensional unit matrix.
$\text{diag}(c_1,c_2,\cdots,c_K)$ is a $K\times K$ dimensional diagonal matrix.
    $\mathcal{K}=\{1,2,\cdots,|\mathcal{K}|\}$ is a set of integers and $|\mathcal{K}|$ denotes the cardinality of the set.
For any subset $\mathcal{A}\in \mathcal{K}$, we denote $\boldsymbol{v}_{-\mathcal{A}}=\{v_i: i\in \mathcal{K-A}\}$.
    $x\perp y|z$ means that random variables $x$ and $y$ are independent from each other when $z$ is given.
$\oplus$ is the bitwise modulo-two sum operator and $n\choose k $ is the Binomial coefficient.

%

\begin{figure}[!t]
\centering
\includegraphics[width=3.2in]{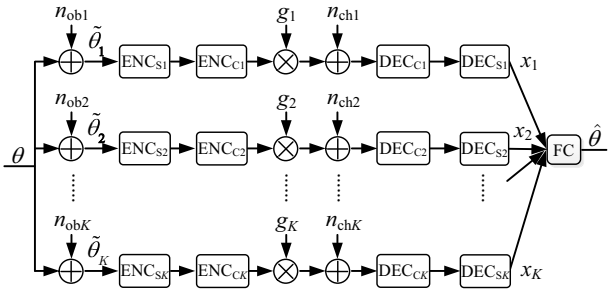}
\caption{Distributed sensing model. $\text{ENC}_\text{S}$ denotes the source encoder,  $\text{ENC}_\text{C}$ denotes the channel encoder, $\text{DEC}_\text{S}$ denotes the source decoder, $\text{DEC}_\text{C}$ denotes the channel decoder, and $\text{FC}$ is the fusion center.} \label{fig:net_model}
\end{figure}

\section{System Model}\label{sec:2_model}
Consider a heterogeneous sensor network of $K$ nodes, each node observing a common phenomenon (source signal) $\theta$ characterized by a Gaussian process.
    The observations will be separately encoded and transmitted to a remote fusion center over orthogonal Gaussian channels, as shown in Fig. \ref{fig:net_model}.
Afterwards, the fusion center will decode the received signals and make an estimation of $\theta$.

\subsection{Source Model and Channel Model}
We assume that the source signal $\{\theta_n, n= 1,2,\cdots\}$ is a  Gaussian process with zero-mean and variance $\sigma^2_\theta$.
    We denote the node set as $\mathcal{K}=\{1,2,\cdots, K\}$ and the frequency bandwidth of the transmitted signals as $W$.
 We further denote the transmit power of node $k$ as $P_k$, the power gain of the $k$-th channel as $g_k$, and the corresponding channel noise power as $\sigma^2_{\text{ch}k}$.
    Thus, the channel SNR of the $k$-th channel can be expressed as $\gamma_{\text{ch}k}=\frac{g_kP_k}{\sigma^2_{\text{ch}k}}$.

On the source model and the channel model, we consider the following assumptions.
\begin{itemize}
  \item [A1] $\{\theta_n\}$ is stationary, as is assumed in \cite{Gastpar-tocode-2003, Gastpar-conf-2003, Gastpar-uncode-opt-2008, Jinjun-oMAC-2007, Cui-Estimation-2007, Sijia-selec-2016, Sijia-collab-2016}.
 According to Rhhlin's ergodic decomposition theorem, we know that any stationary source having an alphabet with a separable $\sigma$-algebra can be considered as a unique mixture of a (possibly uncountable) number of stationary ergodic sub-sources.
     This means that there exists a universal and optimal code/estimation as if we had known in advance which sub-source was chosen \cite{Gray-TIT-1974}.
 For most practical scenarios, therefore, $\{\theta_n\}$ can be treated as an ergodic source without the assumption of ergodicity.
     Moreover, the obtained results can be generalized to systems with non-stationary sources  \cite{Reily-ComST-2014} or quasi-stationary sources \cite{Joda-TCom-2013}, as long as the period before the next change in distribution is sufficiently long for the required source coding and channel coding (e.g., we have $10^5$ channel uses per second with block length $T_{\text{B}}=10~\mu$s).
  \item [A2] The channel gain of each link is available at the fusion center.
                        With the strong computational capability of the fusion center, this can be realized by efficient channel estimation techniques, e.g., \cite{chanelesti-TSP-2010,chanelesti-TSP-2018}.
  \item [A3] Pairwise synchronization between the nodes and the fusion center.
  \item [A4] $g_k=1$ for all $k\in\mathcal{K}$.
                        It should be noted that by varying the channel noise power, we can evaluate the influence of channel attenuation equivalently.
                            For a fixed noise power $\sigma^2_{\text{n}k}$,  for example, we can set $\sigma^2_{\text{ch}k}=\sigma^2_{\text{n}k} d_k^\alpha$ and change $\sigma^2_{\text{ch}k}$ instead of varying distance $d_k$, in which $\alpha$ is the pathloss exponent.
                    By considering some additional multiplicative gains, this model can also be extended to systems with block fading (cf. Subsection \ref{subcec:fading_sim}).
  \item [A5] Each node observes the phenomena at a rate $2W$.
                        In each second, therefore, the number of observations is equal to the number of channel uses for each node.
\end{itemize}

    In particular, we do not need the phase shifts of channel gains and the subcarrier-level synchronization at the fusion center, as in \cite{Cui-Estimation-2007}.

 Based on assumption A4,  the channel SNR of the $k$-th channel can  be rewritten as $\gamma_{\text{ch}k}=\frac{P_k}{\sigma^2_{\text{ch}k}}$.

\subsection{Coding Model}
In a certain epoch, we denote the source sample  as ${\theta}$ and the observation of sensor node $k$ as  $\widetilde{{\theta}}_k$.
    Due to accuracy issues, the observation suffers from independently and identically distributed (i.i.d.) Gaussian observation noise ${ n}_{\text{ob}k}$ with zero-mean and variance $\sigma^2_{\text{ob}k}$.
We then have $\widetilde{{\theta}}_k={\theta}+{n}_{\text{ob}k}$ and denote the observation SNR as $\gamma_{\text{o}k}=\frac{\sigma^2_\theta}{\sigma^2_{\text{ob}k}}$.

    We assume that the distorted observations are encoded with separate source-channel coding at each node distributedly.
To be specific, each node encode a sequence of its observations into a message index using ideal lossy source coding \cite[Chap.~10.2, \textit{Definition} 10.7]{Cover-IT-Book}.
    The index is then encoded into an ideal channel codeword, which would be transmitted to the fusion center through a separate Gaussian channel (e.g., in a separate frequency band).
When the fusion center receives a distorted channel codeword, it decodes the index accurately and then obtains a recovery ${x}_k$ of $\widetilde{{\theta}}_k$ with a certain quantization distortion  $\sigma^2_{\text{qu}k}$, i.e., $\mathbb{E}[(\widetilde{\theta}-x_{k})^2]=\sigma^2_{\text{qu}k}$.
    When the codewords of source coding and channel coding are sufficiently long,  the rate-distortion limit of each node and the capacity of each channel can be approached, which are, respectively, given by~\cite[Chap.~9.1, \textit{Theorem} 9.1.1 and Chap.~10.2, \textit{Theorem} 10.3.2]{Cover-IT-Book}
\begin{align}
    r_{\text{ch}k} & = W \log\left( 1+\gamma_{\text{ch}k} \right), \\
\label{eq:rate}
  r_{\text{sc}k} & = W \log \frac{\sigma^2_\theta+\sigma^2_{\text{ob}k}}{\sigma^2_{\text{qu}k}}.
\end{align}
Thus, the quantization distortion $\sigma^2_{\text{qu}k}$ can be expressed as
\begin{equation} \label{rt:sigma_qu}
    \sigma^2_{\text{qu}k} = \frac{ \sigma^2_\theta+\sigma^2_{\text{ob}k} }  { 1+\gamma_{\text{ch}k} }.
\end{equation}

As shown in~\cite[Chap.~10.3, Fig. 10.5]{Cover-IT-Book}, the rate-distortion limit approaching source coding from observation $\widetilde{\theta}_{k}$ to recovery $x_{k}$ can be characterized by the following test channel
    \begin{equation} \label{eq:x_first}
        \widetilde{\theta}_k = x_k + n_{\text{qu}k},
    \end{equation}
    in which $n_{\text{qu}k}$ is the i.i.d. Gaussian quantization noise with zero mean and variance $\sigma^2_{\text{qu}k}$, and $x_k$ is a Gaussian random variable generated according to a certain optimal distribution $p(x_k|\widetilde{\theta}_k)$.
Moreover, $n_{\text{qu}k}$ and $x_k$ are independent from each other.
     For $k\neq j, k,j,\in\mathcal{K}$, however, it should be noted that $n_{\text{qu}k}$ and $n_{\text{qu}j}$ are correlated with each other, since both $n_{\text{qu}k}$ and $n_{\text{qu}j}$ are correlated with $\theta$.
 For any given $\theta$, therefore, the joint probability density function of $n_{\text{qu}j}$ and $n_{\text{qu}k}$ should be expressed as
\begin{align}\label{dr:cond_independ}
  p(n_{\text{qu}k}, n_{\text{qu}j}|\theta) &= p(n_{\text{qu}k}|\theta) p(n_{\text{qu}j}|\theta, n_{\text{qu}k}).
\end{align}

However, $n_{\text{qu}k}$ can actually be removed from the condition in \eqref{dr:cond_independ}, since $n_{\text{qu}k}$ provides no more information about $n_{\text{qu}j}$ than $\theta$.
    That is, $n_{\text{qu}k}\rightarrow\theta\rightarrow n_{\text{qu}j}$ forms a Markov chain and $n_{\text{qu}j}$ does not depend on $n_{\text{qu}k}$ when $\theta$ is given.
    Thus, \eqref{dr:cond_independ} can be rewritten as
\begin{align}\label{rt:cond_independ}
  p(n_{\text{qu}k}, n_{\text{qu}j}|\theta) = p(n_{\text{qu}k}|\theta) p(n_{\text{qu}j}|\theta),
\end{align}
which means that $n_{\text{qu}k}$ and $n_{\text{qu}j}$ are conditionally independent from each other when $\theta$ is given.

    Since $\widetilde{\theta}_k= \theta+n_{\text{ob}k}$ is a noisy version of $\theta_k$, the recovery vector $\boldsymbol{x}$ (cf. \eqref{eq:x_first}) of $\theta$ can further be expressed as
    \begin{equation}\label{eq:model}
      \boldsymbol{x} = \theta\textbf{1}_K + \boldsymbol{n}_\text{ob} - \boldsymbol{n}_\text{qu},
    \end{equation}
    in which
    \begin{align}\label{df:x_n}
      \boldsymbol{x} &= [x_1, x_2,\cdots, x_K], \\
      \boldsymbol{n}_{\text{ob}} &= [n_{\text{ob}1}, n_{\text{ob}2},\cdots, n_{\text{ob}K}], \\
      \label{df:n_quk}
      \boldsymbol{n}_{\text{qu}} &= [n_{\text{qu}1}, n_{\text{qu}2},\cdots, n_{\text{qu}K}].
    \end{align}

\subsection{Estimation Model}
To estimate the source signal $\theta$ with recovery vector $\boldsymbol{x}$, a best-linear-unbiased-estimator is used at the fusion center~\cite[Lesson 9]{estimation_book}.
    To be specific, the fusion center estimates $\theta$ in each slot through
\begin{equation}\label{df:blue_equation}
    \hat{\theta} = \boldsymbol{f}^\text{T} \boldsymbol{x},
\end{equation}
in which $\boldsymbol{f}=[f_1,f_2,\cdots,f_K]$ is a positive weighting vector satisfying $\boldsymbol{f}^\text{T}\textbf{1}_K=1$.
    By optimizing the MSE of the estimator over all possible weighting vectors and using Assumption A4, the minimum achievable MSE is given by~\cite[Lesson 9, \textit{Theorem} 9.3]{estimation_book}
\begin{equation}\label{eq:mmse_blue}
    D = \min\limits_{\boldsymbol{f}} \mathbb{E}[(\hat{\theta}-\theta)^2] = (\textbf{1}_K^\text{T} \boldsymbol{\Sigma}_{\boldsymbol{n}}^{-1} \textbf{1}_K)^{-1},
\end{equation}
    in which $\boldsymbol{\Sigma}_{\boldsymbol{n}}$ is the covariance matrix of the following vector of total noise
    \begin{equation}
        \boldsymbol{n}=\boldsymbol{n}_{\text{qu}} - \boldsymbol{n}_{\text{ob}}.
    \end{equation}

\section{Distortion of Coded Heterogeneous Sensing Systems} \label{sec:3_hetero}
In this section, we shall  investigate the correlation structure among quantization noises and  present the estimation distortion of coded heterogeneous systems explicitly.

\subsection{Correlation Among Quantization Noises}
We denote the following $(K+1)$ dimensional augmented noise vector with zero-mean and covariance $\boldsymbol{\Sigma}_{\widetilde{\boldsymbol{n}}}$ as,
\begin{equation}
\widetilde{\boldsymbol{n}}=[\boldsymbol{n},\theta]. 
\end{equation}

We rewrite $\boldsymbol{\Sigma}_{\widetilde{\boldsymbol{n}}}$ as
\begin{equation}
    \boldsymbol{\Sigma}_{\widetilde{\boldsymbol{n}}} =
    \left[
      \begin{array}{cc}   
        \boldsymbol{\Sigma}_{11} & \boldsymbol{\Sigma}_{12} \\  
        \boldsymbol{\Sigma}_{21} & \boldsymbol{\Sigma}_{22} \\  
      \end{array}
    \right],
\end{equation}
in which $\boldsymbol{\Sigma}_{11}$ is a $K\times K$ dimensional matrix composed by the elements in the first $K$ rows and the first $K$ columns of $\boldsymbol{\Sigma}_{\widetilde{\boldsymbol{n}}}$.
    Thus,  $\boldsymbol{\Sigma}_{11}$ exactly equals the covariance matrix of $\boldsymbol{n}$, i.e.,
    \begin{equation}\label{dr:sigma_n_11}
        \boldsymbol{\Sigma}_{11}=\boldsymbol{\Sigma}_{\boldsymbol{n}}.
    \end{equation}

We denote the precision matrices of $\widetilde{\boldsymbol{n}}$ and $\boldsymbol{n}$, respectively, as $\textbf{Q}_{\widetilde{\boldsymbol{n}}}$ and $\textbf{Q}_{\boldsymbol{n}}$.
    That is, $\textbf{Q}_{\widetilde{\boldsymbol{n}}}= \boldsymbol{\Sigma}_{\widetilde{\boldsymbol{n}}}^{-1}$ and $\textbf{Q}_{\boldsymbol{n}} = \boldsymbol{\Sigma}_{\boldsymbol{n}}^{-1}$.

Before solving the covariance matrix $\boldsymbol{\Sigma}_{\boldsymbol{n}}$, we present a useful lemma first \cite[Chap.~2.2, \textit{Theorem} 2]{Random-Field-2005}.
\begin{lemma}\label{lem:q_cond_0}
    Let $\boldsymbol{n}$ be a Gaussian distributed vector with precision matrix $\textbf{Q}>0$.
        For any $i\neq j$, we have
    \begin{equation}
        n_i \perp n_j | \boldsymbol{n}_{-ij} \quad \Leftrightarrow \quad \textbf{Q}_{kj}=0,
    \end{equation}
    in which $\perp$ stands for the statistical independence between random variables and $-ij=\mathcal{K}-\{i,j\}$.
\end{lemma}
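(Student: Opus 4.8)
The plan is to prove the standard characterization of conditional independence via zeros of the precision matrix for a nondegenerate multivariate Gaussian. First I would recall that for a Gaussian vector $\boldsymbol{n}$ with positive-definite precision matrix $\textbf{Q}>0$, the joint density is proportional to $\exp\!\left(-\tfrac{1}{2}\boldsymbol{n}^\text{T}\textbf{Q}\boldsymbol{n}\right)$ (up to a mean shift that plays no role here, so I take zero mean). Expanding the quadratic form, the only terms that couple $n_i$ and $n_j$ are the two off-diagonal entries $\textbf{Q}_{ij}=\textbf{Q}_{ji}$, contributing $\textbf{Q}_{ij}\,n_i n_j$. The strategy is then to condition on $\boldsymbol{n}_{-ij}$ and examine the resulting conditional density as a function of $(n_i,n_j)$.

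The key steps, in order, are: (i) write the exponent of the joint density as a sum over all pairs, isolating the terms involving $n_i$ and $n_j$; (ii) fix the values of $\boldsymbol{n}_{-ij}$, so that all terms not involving $n_i$ or $n_j$ become multiplicative constants absorbed into the normalization, and the conditional density of $(n_i,n_j)$ given $\boldsymbol{n}_{-ij}$ is proportional to $\exp\!\big(-\tfrac{1}{2}(\textbf{Q}_{ii}n_i^2 + \textbf{Q}_{jj}n_j^2 + 2\textbf{Q}_{ij}n_in_j) + a_i n_i + a_j n_j\big)$, where $a_i,a_j$ are affine functions of the conditioning variables; (iii) observe that this conditional density factors into a function of $n_i$ times a function of $n_j$ if and only if the cross term vanishes, i.e. $\textbf{Q}_{ij}=0$; (iv) invoke the standard fact that $n_i\perp n_j\mid \boldsymbol{n}_{-ij}$ is equivalent to such a factorization of the conditional density. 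For the converse direction one simply reads the same equivalence backwards: if $\textbf{Q}_{ij}=0$ the conditional density visibly separates, giving conditional independence.

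I would also note that $\textbf{Q}>0$ is exactly what guarantees the quadratic form is a genuine (normalizable) Gaussian density and that the conditional density of $(n_i,n_j)$ given $\boldsymbol{n}_{-ij}$ is itself a nondegenerate bivariate Gaussian, so the factorization criterion is both necessary and sufficient without degenerate edge cases. Since this is a classical result (the reference \cite[Chap.~2.2, \textit{Theorem} 2]{Random-Field-2005} is cited), I would keep the argument to the density-factorization observation above rather than reproving it from scratch.

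The main obstacle is essentially bookkeeping rather than conceptual: one must be careful that conditioning on $\boldsymbol{n}_{-ij}$ genuinely turns every term of the quadratic form except $\textbf{Q}_{ii}n_i^2$, $\textbf{Q}_{jj}n_j^2$, and $2\textbf{Q}_{ij}n_in_j$ into either a constant or a term linear in $n_i$ or $n_j$ alone, and that a linear-in-$n_i$-only (resp. $n_j$-only) term never obstructs the product factorization. Once that is set up cleanly, the "if and only if $\textbf{Q}_{ij}=0$" falls out immediately, so I do not expect any real difficulty beyond writing the quadratic form expansion carefully.
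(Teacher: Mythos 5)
Your proposal is correct: the conditional density of $(n_i,n_j)$ given $\boldsymbol{n}_{-ij}$ is proportional to $\exp\bigl(-\tfrac12(\textbf{Q}_{ii}n_i^2+\textbf{Q}_{jj}n_j^2+2\textbf{Q}_{ij}n_in_j)+a_in_i+a_jn_j\bigr)$, and the factorization criterion then gives the equivalence with $\textbf{Q}_{ij}=0$ exactly as you describe. Note that the paper itself offers no proof of this lemma, it simply cites Rue and Held, and your argument is precisely the standard density-factorization proof found in that reference, so the two are in full agreement; you also implicitly correct the paper's typo, since the condition in the displayed equation should read $\textbf{Q}_{ij}=0$ rather than $\textbf{Q}_{kj}=0$.
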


Since $n_{\text{qu}k}$ and $n_{\text{qu}j}$ are independent from each other when $\theta$ is given (cf. \eqref{rt:cond_independ}), it is clear that $(\textbf{Q}_{\widetilde{\boldsymbol{n}}})_{kj}=0$ for all $k\neq j$ and $k, j\in \mathcal{K}$.

Based on \textit{Lemma} \ref{lem:q_cond_0}, the covariance matrix $\boldsymbol{\Sigma}_{\boldsymbol{n}}$  can be obtained readily, as shown in the following proposition.
\begin{proposition} \label{prop:sigma_n}
    The covariance matrix $\boldsymbol{\Sigma}_{\boldsymbol{n}}$ of total noise $\boldsymbol{n}$ is determined by
\begin{align}\label{rt:sigma_n}
      (\boldsymbol{\Sigma}_{\boldsymbol{n}})_{kk} &= \sigma^2_{\text{ob}k} + \sigma^2_{\text{qu}k} -
                    \frac{2\sigma^2_{\text{ob}k} \sigma^2_{\text{qu}k}} {\sigma^2_\theta+\sigma^2_{\text{ob}k}}, \\
    (\boldsymbol{\Sigma}_{\boldsymbol{n}})_{kj} &=
                    \frac{\sigma^2_\theta\sigma^2_{\text{qu}k} \sigma^2_{\text{qu}j}} {(\sigma^2_\theta+\sigma^2_{\text{ob}k})(\sigma^2_\theta+\sigma^2_{\text{ob}j})},
\end{align}
for all $k\neq j,~k, j\in \mathcal{K}$.
\end{proposition}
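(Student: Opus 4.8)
The plan is to exploit the linear relation $n_k = n_{\text{qu}k} - n_{\text{ob}k}$ together with the joint zero-mean Gaussianity of $(\theta,\boldsymbol{n}_{\text{ob}},\boldsymbol{n}_{\text{qu}})$, which reduces the claim to a handful of scalar second moments: $(\boldsymbol{\Sigma}_{\boldsymbol{n}})_{kk} = \sigma^2_{\text{qu}k} + \sigma^2_{\text{ob}k} - 2\,\mathbb{E}[n_{\text{qu}k}n_{\text{ob}k}]$ and, for $k\ne j$, $(\boldsymbol{\Sigma}_{\boldsymbol{n}})_{kj} = \mathbb{E}[n_{\text{qu}k}n_{\text{qu}j}] - \mathbb{E}[n_{\text{qu}k}n_{\text{ob}j}] - \mathbb{E}[n_{\text{ob}k}n_{\text{qu}j}] + \mathbb{E}[n_{\text{ob}k}n_{\text{ob}j}]$. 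Each remaining moment I would obtain from the law of total covariance, $\operatorname{Cov}(X,Y) = \mathbb{E}[\operatorname{Cov}(X,Y\mid Z)] + \operatorname{Cov}(\mathbb{E}[X\mid Z],\mathbb{E}[Y\mid Z])$, picking the conditioning variable $Z$ so that the conditional-covariance term drops out.

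For the within-node moments I would condition on $Z = \widetilde\theta_k = \theta + n_{\text{ob}k}$. Node $k$ produces $x_k$, hence $n_{\text{qu}k} = \widetilde\theta_k - x_k$, from $\widetilde\theta_k$ using encoder randomness independent of $(\theta,\boldsymbol{n}_{\text{ob}})$, so $(\theta,n_{\text{ob}k})\rightarrow\widetilde\theta_k\rightarrow n_{\text{qu}k}$ is a Markov chain and $n_{\text{qu}k}$ is conditionally independent of both $\theta$ and $n_{\text{ob}k}$ given $\widetilde\theta_k$; the conditional-covariance term therefore vanishes. What remains is linear MMSE: the test channel $\widetilde\theta_k = x_k + n_{\text{qu}k}$ with $x_k\perp n_{\text{qu}k}$ (cf.~\eqref{eq:x_first}) gives $\mathbb{E}[n_{\text{qu}k}\widetilde\theta_k] = \sigma^2_{\text{qu}k}$, hence $\mathbb{E}[n_{\text{qu}k}\mid\widetilde\theta_k] = \frac{\sigma^2_{\text{qu}k}}{\sigma^2_\theta+\sigma^2_{\text{ob}k}}\widetilde\theta_k$, and similarly $\mathbb{E}[\theta\mid\widetilde\theta_k] = \frac{\sigma^2_\theta}{\sigma^2_\theta+\sigma^2_{\text{ob}k}}\widetilde\theta_k$, $\mathbb{E}[n_{\text{ob}k}\mid\widetilde\theta_k] = \frac{\sigma^2_{\text{ob}k}}{\sigma^2_\theta+\sigma^2_{\text{ob}k}}\widetilde\theta_k$. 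Multiplying and taking expectations, with $\operatorname{Var}(\widetilde\theta_k) = \sigma^2_\theta+\sigma^2_{\text{ob}k}$, yields $\mathbb{E}[n_{\text{qu}k}n_{\text{ob}k}] = \frac{\sigma^2_{\text{qu}k}\sigma^2_{\text{ob}k}}{\sigma^2_\theta+\sigma^2_{\text{ob}k}}$ and $\mathbb{E}[n_{\text{qu}k}\theta] = \frac{\sigma^2_{\text{qu}k}\sigma^2_\theta}{\sigma^2_\theta+\sigma^2_{\text{ob}k}}$; substituting the former into $(\boldsymbol{\Sigma}_{\boldsymbol{n}})_{kk}$ gives the first line of \eqref{rt:sigma_n}.

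For $k\ne j$ the two mixed observation--quantization moments vanish, since $n_{\text{qu}k}$ is built from $\widetilde\theta_k$ and node-$k$ randomness, all independent of $n_{\text{ob}j}$, and $\mathbb{E}[n_{\text{ob}k}n_{\text{ob}j}] = 0$ because the observation noises are i.i.d.\ across nodes. For the surviving term $\mathbb{E}[n_{\text{qu}k}n_{\text{qu}j}]$ I would instead condition on $Z = \theta$: the conditional covariance is zero by the conditional independence \eqref{rt:cond_independ} already established, so $\mathbb{E}[n_{\text{qu}k}n_{\text{qu}j}] = \mathbb{E}\big[\mathbb{E}[n_{\text{qu}k}\mid\theta]\,\mathbb{E}[n_{\text{qu}j}\mid\theta]\big]$, and since $\mathbb{E}[n_{\text{qu}k}\mid\theta] = \frac{\mathbb{E}[n_{\text{qu}k}\theta]}{\sigma^2_\theta}\theta = \frac{\sigma^2_{\text{qu}k}}{\sigma^2_\theta+\sigma^2_{\text{ob}k}}\theta$, this equals $\frac{\sigma^2_\theta\,\sigma^2_{\text{qu}k}\sigma^2_{\text{qu}j}}{(\sigma^2_\theta+\sigma^2_{\text{ob}k})(\sigma^2_\theta+\sigma^2_{\text{ob}j})}$, exactly the claimed $(\boldsymbol{\Sigma}_{\boldsymbol{n}})_{kj}$.

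Every step is elementary once the conditioning is in place; the one real point of care — and the main obstacle — is choosing the right $Z$ in each case so that the conditional-covariance term disappears: $\widetilde\theta_k$ for the within-node moments (using that the quantizer's auxiliary randomness is independent of the source and the observation noise) versus $\theta$ for the cross-node moment (using \eqref{rt:cond_independ}); without the right conditioning one would be forced to track the full joint law of the quantization noises, which is precisely what this device circumvents. An equivalent, more mechanical route is to replace the backward test channel \eqref{eq:x_first} by its forward form $x_k = \frac{\sigma^2_\theta+\sigma^2_{\text{ob}k}-\sigma^2_{\text{qu}k}}{\sigma^2_\theta+\sigma^2_{\text{ob}k}}\,\widetilde\theta_k + w_k$ with $w_k$ an independent zero-mean Gaussian whose variance is fixed by $x_k\perp n_{\text{qu}k}$ and $\operatorname{Var}(n_{\text{qu}k}) = \sigma^2_{\text{qu}k}$ as in \eqref{rt:sigma_qu}, substitute $n_k = n_{\text{qu}k} - n_{\text{ob}k}$, and read off both entries by inspection using the mutual independence of $\theta$, the $n_{\text{ob}k}$'s and the $w_k$'s; the diagonal then follows from a short simplification in which the $\sigma^4_{\text{qu}k}$ terms cancel.
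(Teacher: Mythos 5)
Your proof is correct, and it reaches the stated entries by a genuinely different route than the paper. The paper works through Gaussian--Markov--random--field machinery: it forms the covariance and precision matrices of augmented vectors ($[n_{\text{ob}k},n_{\text{qu}k},\widetilde\theta_k]$ for the within-node moments in its \textit{Lemma} 2, and $\widetilde{\boldsymbol{n}}=[\boldsymbol{n},\theta]$ for the cross terms), uses \textit{Lemma} \ref{lem:q_cond_0} to place zeros in the precision matrices from the conditional-independence facts, and then solves the linear system $\boldsymbol{\Sigma}^{\text{T}}\textbf{Q}=\textbf{I}$ for the unknown covariances. You instead compute the same second moments directly: the law of total covariance with the conditioning variable chosen so that the conditional-covariance term vanishes ($\widetilde\theta_k$ for $\mathbb{E}[n_{\text{qu}k}n_{\text{ob}k}]$ and $\mathbb{E}[n_{\text{qu}k}\theta]$, using the Markov chain $(\theta,n_{\text{ob}k})\rightarrow\widetilde\theta_k\rightarrow n_{\text{qu}k}$; $\theta$ for $\mathbb{E}[n_{\text{qu}k}n_{\text{qu}j}]$, using \eqref{rt:cond_independ}), together with the linear-Gaussian conditional means and the observation that the cross moments $\mathbb{E}[n_{\text{qu}k}n_{\text{ob}j}]$ vanish because each node's quantization noise is generated from its own observation and local randomness. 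The intermediate quantities you obtain coincide with the paper's \textit{Lemma} 2, and your off-diagonal value agrees with the solved $c_{kj}$. What each approach buys: yours is more elementary and transparent (no precision matrices, no matrix inversion identities, and it exhibits $\mathbb{E}[n_{\text{qu}k}n_{\text{qu}j}]$ explicitly rather than only the total-noise covariance), while the paper's is more mechanical and scales without case-by-case choices of conditioning variable, since the zero pattern of the precision matrix encodes all the conditional-independence structure at once. The only assumptions you add beyond the paper's explicit statements --- joint Gaussianity of $(\theta,\boldsymbol{n}_{\text{ob}},\boldsymbol{n}_{\text{qu}})$ and independence of the per-node test-channel randomness across nodes --- are exactly the ones the paper itself relies on implicitly, so they do not constitute a gap.
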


\begin{proof}
    First, we note that $(\boldsymbol{\Sigma}_{\boldsymbol{n}})_{kk} = \mathbb{E}[(n_{\text{qu}k}-n_{\text{ob}k})^2]$ is the total noise power of node $k$,  $(\boldsymbol{\Sigma}_{\widetilde{\boldsymbol{n}}})_{K+1,K+1}=\sigma^2_\theta$ is the source signal power, and $(\boldsymbol{\Sigma}_{\widetilde{\boldsymbol{n}}})_{k,K+1}=(\boldsymbol{\Sigma}_{\widetilde{\boldsymbol{n}}})_{K+1,k}$ is the correlation between $\theta$ and the total noise of node $k$.
        Since $\boldsymbol{\Sigma}_{11}=\boldsymbol{\Sigma}_{\boldsymbol{n}}$ (cf. \eqref{dr:sigma_n_11}) and correlation $(\boldsymbol{\Sigma}_{\widetilde{\boldsymbol{n}}})_{k,K+1}$ can be calculated readily, the remaining unknown elements in $\boldsymbol{\Sigma}_{\widetilde{\boldsymbol{n}}}$ are $(\boldsymbol{\Sigma}_{\widetilde{\boldsymbol{n}}})_{kj}$ for all $k\neq j,  k,j\in \mathcal{K}$.
    Second, according to \textit{Lemma} \ref{lem:q_cond_0}, we have $(\textbf{Q}_{\widetilde{\boldsymbol{n}}})_{kj}=0$ for all $k\neq j, k,j\in \mathcal{K}$.
        Thus, the remaining unknown elements of $\textbf{Q}_{\widetilde{\boldsymbol{n}}}$ include $(\textbf{Q}_{\widetilde{\boldsymbol{n}}})_{kk}$, $(\textbf{Q}_{\widetilde{\boldsymbol{n}}})_{k,K+1}$, and $(\textbf{Q}_{\widetilde{\boldsymbol{n}}})_{K+1,k}$ for all $k\in \mathcal{K}$.
    Fortunately, all these unknown variables can be solved through equation $\boldsymbol{\Sigma}_{\widetilde{\boldsymbol{n}}}^\text{T} \textbf{Q}_{\widetilde{\boldsymbol{n}}}=\textbf{I}_{K+1}$.
    More details are shown in \textit{Appendix}~\ref{prf:prop_sigma_n}.
\end{proof}

\subsection{Minimum Distortion}
Based on \textit{Proposition} \ref{prop:sigma_n} and equation \eqref{eq:mmse_blue}, the minimum achievable distortion of  the coded heterogeneous sensing system can be obtained, as shown in the following theorem.

\begin{theorem}\label{th:distortion}
    The minimum achievable distortion of the coded heterogeneous sensing system is given by
    \begin{equation}\label{rt:distortion}
        D = \sigma^2_\theta\left(\sum_{k=1}^K \frac{1}{\lambda_k} - \frac{1}{1+  \sum_{k=1}^K \frac{u_k^2}{ \lambda_k} }
                \left(\sum_{k=1}^K\frac{u_k}{\lambda_k} \right)^2\right)^{-1},
    \end{equation}
    in which
    \begin{align}\label{rt:u_k}
        u_k &= \frac{1}{1+\gamma_{\text{ch}k}}, \\
        \label{rt:lamda_k}
        \lambda_k & =
                        \frac{(1+\gamma_{\text{ch}k} + \gamma_{\text{ob}k})\gamma_{\text{ch}k}}
                  {(1+\gamma_{\text{ch}k})^2 \gamma_{\text{ob}k} }.
    \end{align}
\end{theorem}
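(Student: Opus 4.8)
The plan is to compute the distortion $D = (\mathbf{1}_K^{\text{T}} \boldsymbol{\Sigma}_{\boldsymbol{n}}^{-1} \mathbf{1}_K)^{-1}$ from \eqref{eq:mmse_blue} by applying the Sherman--Morrison formula to the covariance matrix $\boldsymbol{\Sigma}_{\boldsymbol{n}}$ given in \textit{Proposition}~\ref{prop:sigma_n}. First I would observe that the expressions in \eqref{rt:sigma_n} have a rank-one-plus-diagonal structure: the off-diagonal entries factor as $(\boldsymbol{\Sigma}_{\boldsymbol{n}})_{kj} = a_k a_j$ with $a_k = \frac{\sigma_\theta \sigma^2_{\text{qu}k}}{\sigma^2_\theta + \sigma^2_{\text{ob}k}}$, so that $\boldsymbol{\Sigma}_{\boldsymbol{n}} = \mathbf{D}_0 + \boldsymbol{a}\boldsymbol{a}^{\text{T}}$ where $\mathbf{D}_0 = \text{diag}(d_1,\dots,d_K)$ absorbs the difference between the true diagonal entry and $a_k^2$. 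A short calculation using \eqref{rt:sigma_qu} (i.e. $\sigma^2_{\text{qu}k} = (\sigma^2_\theta + \sigma^2_{\text{ob}k})/(1+\gamma_{\text{ch}k})$) and $\gamma_{\text{ob}k} = \sigma^2_\theta/\sigma^2_{\text{ob}k}$ should show that $d_k = \sigma^2_\theta/\lambda_k$ with $\lambda_k$ as in \eqref{rt:lamda_k}, and that $a_k = u_k \sqrt{d_k} \cdot(\text{const})$ — more precisely $a_k^2/d_k = \sigma^2_\theta u_k^2/\lambda_k$ after simplification, with $u_k = 1/(1+\gamma_{\text{ch}k})$ as in \eqref{rt:u_k}. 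This identification of $d_k$, $a_k$ in terms of the SNR variables is the bookkeeping core of the argument.

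Next I would invoke Sherman--Morrison: $\boldsymbol{\Sigma}_{\boldsymbol{n}}^{-1} = \mathbf{D}_0^{-1} - \frac{\mathbf{D}_0^{-1}\boldsymbol{a}\boldsymbol{a}^{\text{T}}\mathbf{D}_0^{-1}}{1 + \boldsymbol{a}^{\text{T}}\mathbf{D}_0^{-1}\boldsymbol{a}}$, and then contract on both sides with $\mathbf{1}_K$ to get
\begin{equation*}
\mathbf{1}_K^{\text{T}} \boldsymbol{\Sigma}_{\boldsymbol{n}}^{-1}\mathbf{1}_K = \sum_{k=1}^K \frac{1}{d_k} - \frac{\left(\sum_{k=1}^K a_k/d_k\right)^2}{1 + \sum_{k=1}^K a_k^2/d_k}.
\end{equation*}
Substituting $d_k = \sigma^2_\theta/\lambda_k$ and the relations for $a_k/d_k$ and $a_k^2/d_k$ derived above, the factors of $\sigma^2_\theta$ should collect so that $\mathbf{1}_K^{\text{T}}\boldsymbol{\Sigma}_{\boldsymbol{n}}^{-1}\mathbf{1}_K = \frac{1}{\sigma^2_\theta}\big(\sum_k \lambda_k^{-1} - (\sum_k u_k/\lambda_k)^2/(1 + \sum_k u_k^2/\lambda_k)\big)$, and taking the reciprocal yields \eqref{rt:distortion}.

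The main obstacle I anticipate is purely algebraic: verifying that the diagonal correction $d_k = (\boldsymbol{\Sigma}_{\boldsymbol{n}})_{kk} - a_k^2$ really simplifies to $\sigma^2_\theta/\lambda_k$, and that the cross-terms $a_k/d_k$ and $a_k^2/d_k$ reduce cleanly to $u_k/\lambda_k$ (up to the common $\sigma^2_\theta$ scaling) and $\sigma^2_\theta u_k^2/\lambda_k$ respectively. This requires carefully expanding $(\boldsymbol{\Sigma}_{\boldsymbol{n}})_{kk} = \sigma^2_{\text{ob}k} + \sigma^2_{\text{qu}k} - 2\sigma^2_{\text{ob}k}\sigma^2_{\text{qu}k}/(\sigma^2_\theta+\sigma^2_{\text{ob}k})$, substituting $\sigma^2_{\text{qu}k}$ from \eqref{rt:sigma_qu}, and converting everything to $\gamma_{\text{ob}k}$ and $\gamma_{\text{ch}k}$; one should factor out $\sigma^2_\theta$ early and track the denominator $(1+\gamma_{\text{ch}k})^2\gamma_{\text{ob}k}$ to match \eqref{rt:lamda_k}. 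Once the substitution dictionary $(d_k, a_k) \leftrightarrow (\lambda_k, u_k)$ is pinned down, the rest is a one-line application of Sherman--Morrison, so I would relegate the detailed simplification to an appendix and keep the main text at the level of the structural identity $\boldsymbol{\Sigma}_{\boldsymbol{n}} = \mathbf{D}_0 + \boldsymbol{a}\boldsymbol{a}^{\text{T}}$.
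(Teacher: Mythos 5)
Your proposal follows essentially the same route as the paper's own proof: the paper writes $\boldsymbol{\Sigma}_{\boldsymbol{n}} = \boldsymbol{\Lambda} + \sigma^2_\theta\,\boldsymbol{u}\boldsymbol{u}^{\text{T}}$ with $\boldsymbol{\Lambda}=\text{diag}(\sigma^2_\theta\lambda_1,\cdots,\sigma^2_\theta\lambda_K)$ and applies the same rank-one (Sherman--Morrison) inversion identity before contracting with $\textbf{1}_K$, which is exactly your decomposition, since $a_k=\sigma_\theta\sigma^2_{\text{qu}k}/(\sigma^2_\theta+\sigma^2_{\text{ob}k})=\sigma_\theta u_k$. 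The only slip is in your substitution dictionary: the correct identification is $d_k=(\boldsymbol{\Sigma}_{\boldsymbol{n}})_{kk}-a_k^2=\sigma^2_\theta\lambda_k$ (hence $a_k^2/d_k=u_k^2/\lambda_k$ and $a_k/d_k=u_k/(\sigma_\theta\lambda_k)$), not $d_k=\sigma^2_\theta/\lambda_k$ as written; your final displayed contraction is consistent only with the former, so the argument goes through once this is corrected.
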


\begin{proof}
    To prove \textit{Theorem} \ref{th:distortion}, we need the inverse matrix of covariance matrix $\boldsymbol{\Sigma}_{\boldsymbol{n}}$, which can be obtained using equation $(\textbf{A}+b\boldsymbol{u} \boldsymbol{v}^{\text{T}} )^{-1} = \textbf{A}^{-1} - \frac{b}{1+ b \boldsymbol{v}^{\text{T}}  \textbf{A}^{-1}\boldsymbol{u} } \textbf{A}^{-1} \boldsymbol{u} \boldsymbol{v}^{\text{T}} \textbf{A}^{-1}$~\cite[Chap.~1.7, (1.7.12)]{Xianda-Matrix-2005}.
        For more details,  refer to \textit{Appendix} \ref{prf:distortion}.
\end{proof}

It is interesting to investigate a system in which there exists an exceptional node, i.e., the channel SNR or/and the observation SNR of the node is very small or very large.
    Without loss of generality, we assume that node $K$ is the exceptional node.
We denote $\mathcal{K}_{-K}=\mathcal{K}-\{K\}$ and the corresponding distortion as $D(\mathcal{K}_{-K})=\sigma^2_\theta \big(a-\frac{c^2}{b}\big)^{-1}$, in which $a=\sum_{k=1}^{K-1} \frac{1}{\lambda_k}$, $b=1+  \sum_{k=1}^{K-1} \frac{u_k^2}{ \lambda_k}$, and $c=\sum_{k=1}^{K-1}\frac{u_k}{\lambda_k}$.
    For such a system, we have the following observations.

\begin{enumerate}
  \item If $\gamma_{\text{ch}K}$ goes either to zero or infinity, or $\gamma_{\text{ob}K}$ goes to infinity, $D(\mathcal{K})$ would be slightly smaller than $D(\mathcal{K}_{-K})$.
      In particular, we have
      \begin{align}
            D(\mathcal{K}) &= \sigma^2_\theta \Big(a-\frac{c^2}{b} + \frac{(b-c)^2}{b^2} \Big)^{-1} \hspace{-5mm} &\text{if}~\gamma_{\text{ch}K}\rightarrow0; \\
            D(\mathcal{K}) &= \sigma^2_\theta \Big(a-\frac{c^2}{b} + \gamma_{\text{ob}K} \Big)^{-1} &\text{if}~\gamma_{\text{ch}K}\rightarrow\infty.
      \end{align}
  \item If $\gamma_{\text{ob}K}$ goes to zero, it can readily be shown that $D(\mathcal{K}) = D(\mathcal{K}_{-K})$, regardless of channel SNR $\gamma_{\text{ch}K}$.
  \item If both $\gamma_{\text{ob}K}$ and $\gamma_{\text{ch}K}$ are sufficiently large (e.g., larger than some  thresholds $\gamma_{\text{ob,th}}$ and $\gamma_{\text{ch,th}}$), we say that node $K$ is a \textit{capable node} and have $D(\mathcal{K}) \rightarrow 0$.
\end{enumerate}

It is seen that the contribution of a node is highly determined by its observation SNR.

\section{To Code or Not to Code} \label{sec:4_homo}
In this section, we consider a \textit{homogeneous} distributed sensing system in which the nodes and the channels  are the same all over the network.
    For this system, we investigate whether the coded scheme performs better than the uncoded scheme or not.

In particular, we assume that all the nodes have the same observation noise power and the same transmit power, and all the links have the same channel gain and the same noise power.
    Furthermore, we assume that the observation noises of the nodes are i.i.d. random variables, so are the channel noises of the links.
In this section, therefore, we shall omit the node-indexes of related variables and denote them as $\sigma^2_{\text{ob}}$, $P$, and $\sigma^2_{\text{ch}}$, respectively.
    Likewise, the corresponding observation SNR and channel SNR are denoted as $\gamma_{\text{ob}}$ and  $\gamma_{\text{ch}}$, respectively.

\subsection{Distortion of Coded Homogeneous Sensing Systems}\label{coded_K}
First, we present the minimum achievable distortion in coded homogenous  sensing systems  by the following theorem.
\begin{theorem}\label{th:homo_digital}
    For a coded homogeneous sensing system with $K$ nodes, the minimum achievable estimation distortion is 
    \begin{align} \label{rt:homo_digital}
        D_{\text{Coded}} = \frac{\sigma^2_\theta}{K}
             \left( \frac{\gamma_{\text{ch}}}  {(1+\gamma_{\text{ch}}) \gamma_{\text{ob}} }
                        + \frac{K+\gamma_{\text{ch}}}  {(1+\gamma_{\text{ch}})^2}
             \right).
    \end{align}
\end{theorem}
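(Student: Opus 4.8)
The plan is to obtain Theorem~\ref{th:homo_digital} as a direct specialization of Theorem~\ref{th:distortion}, rather than re-deriving the covariance structure from scratch. In the homogeneous system every node shares the common observation SNR $\gamma_{\text{ob}}$ and channel SNR $\gamma_{\text{ch}}$, so the per-node quantities in \eqref{rt:u_k} and \eqref{rt:lamda_k} lose their indices: one writes $u_k=u=\tfrac{1}{1+\gamma_{\text{ch}}}$ and $\lambda_k=\lambda=\tfrac{(1+\gamma_{\text{ch}}+\gamma_{\text{ob}})\gamma_{\text{ch}}}{(1+\gamma_{\text{ch}})^2\gamma_{\text{ob}}}$ for all $k\in\mathcal{K}$. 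The three sums appearing in \eqref{rt:distortion} then collapse to $\sum_{k}\tfrac{1}{\lambda_k}=\tfrac{K}{\lambda}$, $\sum_{k}\tfrac{u_k^2}{\lambda_k}=\tfrac{Ku^2}{\lambda}$, and $\sum_{k}\tfrac{u_k}{\lambda_k}=\tfrac{Ku}{\lambda}$.

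Substituting these into \eqref{rt:distortion} gives
\begin{equation*}
 D_{\text{Coded}}=\sigma^2_\theta\left(\frac{K}{\lambda}-\frac{1}{1+\frac{Ku^2}{\lambda}}\cdot\frac{K^2u^2}{\lambda^2}\right)^{-1}.
\end{equation*}
I would then clear the inner fraction and put the bracket over the common denominator $\lambda(\lambda+Ku^2)$; its numerator becomes $K(\lambda+Ku^2)-K^2u^2=K\lambda$, so the $K^2u^2$ terms cancel and the bracket simplifies to $\tfrac{K}{\lambda+Ku^2}$, whence $D_{\text{Coded}}=\tfrac{\sigma^2_\theta}{K}\left(\lambda+Ku^2\right)$. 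It then remains to match this with the claimed form: splitting the numerator of $\lambda$ as $(1+\gamma_{\text{ch}}+\gamma_{\text{ob}})=(1+\gamma_{\text{ch}})+\gamma_{\text{ob}}$ yields $\lambda=\tfrac{\gamma_{\text{ch}}}{(1+\gamma_{\text{ch}})\gamma_{\text{ob}}}+\tfrac{\gamma_{\text{ch}}}{(1+\gamma_{\text{ch}})^2}$, and adding $Ku^2=\tfrac{K}{(1+\gamma_{\text{ch}})^2}$ merges the last two terms into $\tfrac{K+\gamma_{\text{ch}}}{(1+\gamma_{\text{ch}})^2}$, reproducing \eqref{rt:homo_digital}.

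Everything after the specialization is elementary algebra, so I do not expect a genuine obstacle. The one step that warrants care is the rank-one cancellation in the middle paragraph, namely checking that the Sherman--Morrison-type correction term in \eqref{rt:distortion} combines with $\tfrac{K}{\lambda}$ to leave exactly $\tfrac{K}{\lambda+Ku^2}$ — a short computation, but the place where a stray factor of $K$ or a sign slip would most easily creep in, so I would verify that identity explicitly before rearranging $\lambda$. As a consistency check, one can also confirm the expected limiting behavior, e.g.\ $\gamma_{\text{ob}}\to\infty$ reduces $D_{\text{Coded}}$ to $\tfrac{\sigma^2_\theta(K+\gamma_{\text{ch}})}{K(1+\gamma_{\text{ch}})^2}$, the purely channel-limited residual distortion.
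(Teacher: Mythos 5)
Your proposal is correct and follows essentially the same route as the paper's own proof in \textit{Appendix}~\ref{prf:homo_digital}: specialize \textit{Theorem}~\ref{th:distortion} by setting $u_k=u$ and $\lambda_k=\lambda$, collapse the sums to get $\sigma^2_\theta D^{-1}_{\text{Coded}}=K/(\lambda+Ku^2)$, and then expand $\lambda+Ku^2$ to recover \eqref{rt:homo_digital}. The rank-one cancellation you flag as the delicate step checks out exactly as you computed it, so there is nothing to add.
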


\begin{proof}
    See \textit{Appendix} \ref{prf:homo_digital}.
\end{proof}


From \textit{Theorem} \ref{th:homo_digital}, the scaling law of $D_{\text{Coded}}$ with respect to the number of nodes can be readily obtained, as shown in the following corollary.
\begin{corollary}\label{prop:homo_digital_K_inf}
    $D_{\text{Coded}}$ is linearly decreasing with $K$ and converges to the following constant as $K$ goes to infinity,
    \begin{align} \label{rt:homo_digital_K_inf}
        D_{\text{Coded}}(\infty) = \frac{\sigma^2_\theta}{(1+\gamma_{\text{ch}})^2}.
    \end{align}
\end{corollary}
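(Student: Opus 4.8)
The plan is to read off both claims directly from the closed-form expression \eqref{rt:homo_digital}, with no new machinery. First I would distribute the $1/K$ prefactor over the two terms in the parenthesis and peel off the $K$-independent part: since $\frac{1}{K}\cdot\frac{K+\gamma_{\text{ch}}}{(1+\gamma_{\text{ch}})^2}=\frac{1}{(1+\gamma_{\text{ch}})^2}+\frac{\gamma_{\text{ch}}}{K(1+\gamma_{\text{ch}})^2}$, one obtains
\[
D_{\text{Coded}}=\frac{\sigma^2_\theta}{(1+\gamma_{\text{ch}})^2}+\frac{\sigma^2_\theta}{K}\left(\frac{\gamma_{\text{ch}}}{(1+\gamma_{\text{ch}})\gamma_{\text{ob}}}+\frac{\gamma_{\text{ch}}}{(1+\gamma_{\text{ch}})^2}\right).
\]
One may also recognize the bracketed coefficient as exactly $\lambda$ from \eqref{rt:lamda_k}, so that $D_{\text{Coded}}=\sigma^2_\theta u^2+\sigma^2_\theta\lambda/K$ with $u$ as in \eqref{rt:u_k}; this is the homogeneous specialization of \textit{Theorem}~\ref{th:distortion}. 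Either way, this displays $D_{\text{Coded}}$ as an affine function of $1/K$, i.e. $D_{\text{Coded}}=c_0+c_1/K$ with $c_0=\sigma^2_\theta/(1+\gamma_{\text{ch}})^2$ and $c_1$ a constant independent of $K$.

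From this form both statements are immediate. Because $c_1>0$ whenever $\gamma_{\text{ch}},\gamma_{\text{ob}}>0$, the map $K\mapsto D_{\text{Coded}}$ is strictly decreasing and is linear when viewed as a function of $1/K$ (equivalently $\partial D_{\text{Coded}}/\partial K=-c_1/K^2<0$), which is the sense of ``linearly decreasing with $K$.'' Letting $K\to\infty$ sends the $c_1/K$ term to zero and leaves $D_{\text{Coded}}(\infty)=c_0=\sigma^2_\theta/(1+\gamma_{\text{ch}})^2$, which is \eqref{rt:homo_digital_K_inf}; moreover $D_{\text{Coded}}>c_0$ for every finite $K$, so the limit is approached monotonically from above.

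There is no genuinely hard step here: the whole argument is elementary algebra on a rational function of $K$, resting entirely on \textit{Theorem}~\ref{th:homo_digital}. The only items worth double-checking are the two small bookkeeping points above --- that the coefficient multiplying $1/K$ is indeed free of $K$ and strictly positive (equivalently, that it equals $\sigma^2_\theta\lambda$), and, if one wants to be careful, that $D_{\text{Coded}}$ stays finite and positive for all $K\geq1$ so the stated limit is well posed.
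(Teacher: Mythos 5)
Your proposal is correct and follows the same route as the paper: the paper simply observes that, with $\gamma_{\text{ob}}$ and $\gamma_{\text{ch}}$ finite, the corollary follows immediately from the closed form in \textit{Theorem}~\ref{th:homo_digital}, which is exactly the affine-in-$1/K$ decomposition $D_{\text{Coded}}=\sigma^2_\theta u^2+\sigma^2_\theta\lambda/K$ you write out explicitly. Your version just makes the bookkeeping (the coefficient $\sigma^2_\theta\lambda$ being $K$-free and positive, and the limit being the constant term) explicit, including the reading of ``linearly decreasing'' as linear in $1/K$, consistent with the paper's intent.
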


\begin{proof}
    Since both the observation SNR and channel SNR are finite, \textit{Corollary} \ref{prop:homo_digital_K_inf} follows \textit{Theorem} \ref{th:homo_digital} immediately.
\end{proof}

From \textit{Corollary} \ref{prop:homo_digital_K_inf}, it is seen that the estimation distortion is dominated by the channel SNR and the effect of observation noises vanishes gradually with the increase in $K$.
    That is, we can combat observation noises by using more nodes.
It is also seen that $D_{\text{Coded}}$ does not decrease to zero with the increase in $K$, which is due to the correlation among quantization noises.

When the SNR(s) $\gamma_{\text{ob}}$ or/and $\gamma_{\text{ch}}$ go to zero or/and  infinity,  we also have the following corollary.
\begin{corollary} \label{cor:ob_limits}
    In a coded homogeneous sensing system with $K$ nodes,  as the observation SNRs and channel SNRs go to infinity or zero, the achievable estimation distortion is summarized in the following table.

\begin{savenotes}  
\begin{table}[htbp]
\centering
    \caption{Distortion $D_{\text{Coded}}$ in limiting cases.}\label{tab:distortion_limiting_hetero}
\begin{tabular}{|c|c|c|c|}
\bottomrule
 \diagbox{$\gamma_{\text{ob}}$}{$\gamma_{\text{ch}}$} &  $\infty$ & finite & $0$\\
\hline
 $\infty$ &  $0$ & $\frac{K+\gamma_{\text{ch}}}{K(1+\gamma_{\text{ch}})^2}\sigma^2_\theta$  & $\sigma^2_\theta$\\
\hline
finite  & $\frac{1}{K\gamma_{\text{ob}}}\sigma^2_\theta$ & \eqref{rt:homo_digital} & $\sigma^2_\theta$\\
\hline
$0$    & $\infty$ & $\infty$ & $\sigma^2_\theta$\\
\toprule
\end{tabular}
\end{table}
\end{savenotes}
\end{corollary}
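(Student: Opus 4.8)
The plan is to read every entry of Table~\ref{tab:distortion_limiting_hetero} straight off the closed form \eqref{rt:homo_digital} of \textit{Theorem}~\ref{th:homo_digital}, by splitting $D_{\text{Coded}}$ into its two non-negative summands
\[
  T_1 = \frac{\gamma_{\text{ch}}}{(1+\gamma_{\text{ch}})\,\gamma_{\text{ob}}}, \qquad
  T_2 = \frac{K+\gamma_{\text{ch}}}{(1+\gamma_{\text{ch}})^2},
\]
so that $D_{\text{Coded}} = \frac{\sigma^2_\theta}{K}\,(T_1+T_2)$, and then evaluating the one- and two-sided limits in $(\gamma_{\text{ob}},\gamma_{\text{ch}})$. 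Since $D_{\text{Coded}}$ is continuous on $(0,\infty)^2$, each extreme value of an argument is interpreted as the corresponding limit, which also sidesteps the fact that $\boldsymbol{\Sigma}_{\boldsymbol{n}}$ degenerates at $\gamma_{\text{ch}}=0$; the nine entries are then produced by sending one or both arguments to their extremes.

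First I would treat the row with $\gamma_{\text{ob}}$ finite. For $\gamma_{\text{ch}}\to\infty$ we get $T_1\to 1/\gamma_{\text{ob}}$ and $T_2\to 0$, hence $D_{\text{Coded}}\to \sigma^2_\theta/(K\gamma_{\text{ob}})$; for $\gamma_{\text{ch}}\to 0$ we get $T_1\to 0$ and $T_2\to K$, hence $D_{\text{Coded}}\to\sigma^2_\theta$; the middle entry is \eqref{rt:homo_digital} verbatim. The $\gamma_{\text{ob}}\to\infty$ row then follows by additionally dropping $T_1$: the middle entry becomes $\tfrac{K+\gamma_{\text{ch}}}{K(1+\gamma_{\text{ch}})^2}\sigma^2_\theta$, the $\gamma_{\text{ch}}\to\infty$ corner becomes $\lim_{\gamma_{\text{ob}}\to\infty}\sigma^2_\theta/(K\gamma_{\text{ob}})=0$, and the $\gamma_{\text{ch}}\to 0$ corner stays at $\sigma^2_\theta$ because the limiting $\gamma_{\text{ch}}\to0$ value is independent of $\gamma_{\text{ob}}$.

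It remains to handle the $\gamma_{\text{ob}}\to 0$ row, where the only term that can blow up is $T_1$. For $\gamma_{\text{ch}}$ finite and positive, or $\gamma_{\text{ch}}\to\infty$, the factor $\gamma_{\text{ch}}/(1+\gamma_{\text{ch}})$ is bounded away from $0$, so $T_1\to\infty$ and $D_{\text{Coded}}\to\infty$, giving the first two entries of that row. The corner $\gamma_{\text{ob}}\to 0,\ \gamma_{\text{ch}}\to 0$ is the one delicate case, since there $T_1$ is of the form $0/0$; I would resolve it by noting that on the edge $\gamma_{\text{ch}}=0$ one has $T_1\equiv 0$ and $T_2\equiv K$ for every $\gamma_{\text{ob}}>0$, so $D_{\text{Coded}}\equiv\sigma^2_\theta$ there and the $\gamma_{\text{ob}}\to 0$ limit is trivially $\sigma^2_\theta$, and then checking that the other order of limits (fix $\gamma_{\text{ob}}$, let $\gamma_{\text{ch}}\to0$) yields the same value.

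The only genuine obstacle is this indeterminate corner: one must verify that the two iterated limits agree and are consistent with the neighbouring table entries; the rest is routine substitution. As a sanity check on the signs of the limits, $\gamma_{\text{ob}}\to 0$ (unboundedly noisy observations) must send $D_{\text{Coded}}$ to $\infty$ whenever the channel carries information, because the unbiasedness constraint $\boldsymbol{f}^{\text{T}}\mathbf{1}_K=1$ forbids the BLUE from shrinking toward $0$, whereas $\gamma_{\text{ch}}\to 0$ (a useless channel) forces $D_{\text{Coded}}=\sigma^2_\theta$, i.e. the prior variance, regardless of $\gamma_{\text{ob}}$ — exactly the pattern asserted by the last column and last row of Table~\ref{tab:distortion_limiting_hetero}.
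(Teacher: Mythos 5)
Your method is the same as the paper's: the paper proves this corollary in one line by taking the limits of the closed form \eqref{rt:homo_digital}, and your term-by-term evaluation of $T_1$ and $T_2$ correctly reproduces eight of the nine entries (the finite/finite entry being \eqref{rt:homo_digital} itself).

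However, the one step you single out as delicate --- the corner $\gamma_{\text{ob}}\to 0$, $\gamma_{\text{ch}}\to 0$ --- is handled incorrectly as written. The ``other order of limits'' you propose to check, namely fix $\gamma_{\text{ob}}$ and let $\gamma_{\text{ch}}\to 0$, is the \emph{same} iterated limit as your edge argument ($\gamma_{\text{ch}}\to 0$ inner, $\gamma_{\text{ob}}\to 0$ outer), and both give $\sigma^2_\theta$. The genuinely different order --- fix $\gamma_{\text{ch}}>0$, let $\gamma_{\text{ob}}\to 0$ (giving $\infty$), then let $\gamma_{\text{ch}}\to 0$ --- gives $\infty$, so the two iterated limits do \emph{not} agree; indeed along the path $\gamma_{\text{ch}}=c\,\gamma_{\text{ob}}\to 0$ one has $T_1\to c$ and $D_{\text{Coded}}\to \frac{\sigma^2_\theta}{K}(c+K)$, which depends on $c$, so the joint limit does not exist. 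Your claimed verification would therefore fail if carried out correctly. The table entry $\sigma^2_\theta$ at that corner can only be read under the convention that $\gamma_{\text{ch}}\to 0$ is taken first (consistent with the entire last column, and with your own sanity check that a useless channel pins the distortion at the prior variance $\sigma^2_\theta$ regardless of $\gamma_{\text{ob}}$); stating that convention, rather than asserting agreement of the iterated limits, is the correct way to close the argument.
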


\begin{proof}
    By considering the limitations of \eqref{rt:homo_digital}, the corollary can be proved readily.
\end{proof}

\begin{figure*}[htp]   

\hspace{-6 mm}
    \begin{tabular}{cc}
    \subfigure[$K=3$]
    {
    \begin{minipage}[t]{0.5\textwidth}
    \centering
    {\includegraphics[width = 3.7in] {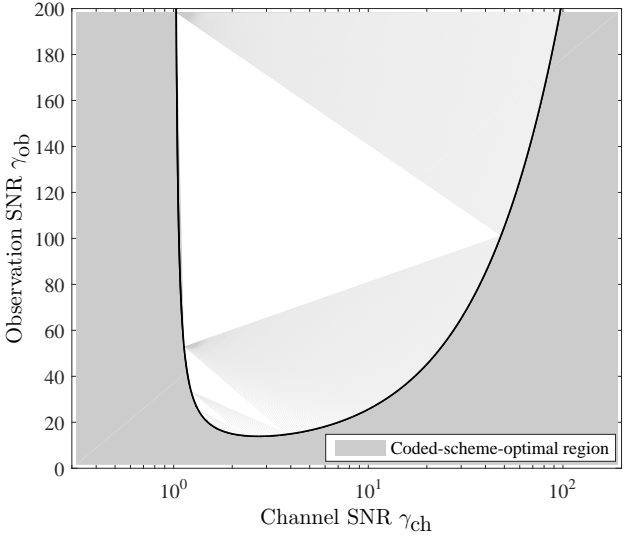} \label{fig:coded_opt_region}}
    \end{minipage}
    }

    \subfigure[$\gamma_{\text{ob}}=0.8$ or $\gamma_{\text{ch}}=0.8$]
    {
    \begin{minipage}[t]{0.5\textwidth}
    \centering
    {\includegraphics[width = 3.7in] {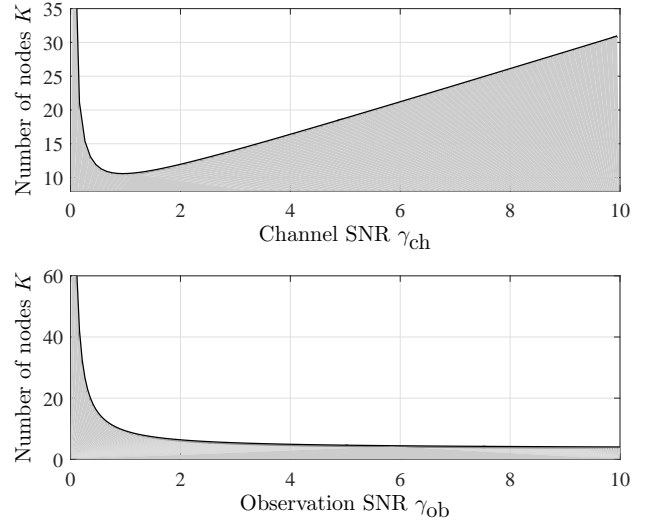} \label{fig:coded_opt_region_vs_K}}
    \end{minipage}
    }
    \end{tabular}
\caption{Coded-scheme-optimal region (the shaded areas). } \label{fig:coded_opt_r}
\end{figure*}

\subsection{Distortion of Uncoded Homogeneous Sensing Systems}
In this subsection, we consider an uncoded homogeneous sensing system in which the noisy observations are directly amplified and forwarded to the fusion center.
    To be specific, the noisy observation $\widetilde{\theta}_k$ is amplified with power gain
    \begin{equation}\label{df:amplif_gain}
        \alpha = \frac{P}{\sigma^2_\theta+\sigma^2_{\text{ob}}}.
    \end{equation}

The amplified signals are then transmitted to the fusion center through orthogonal Gaussian channels with unit power gain and noise power $\sigma^2_{\text{ch}}$.
    Thus, the observation SNR $\gamma_{\text{ob}}=\sigma^2_\theta/\sigma^2_{\text{ob}}$ and the channel SNR $\gamma_{\text{ch}}=P/\sigma^2_{\text{ch}}$ are the same with those used in Subsection \ref{coded_K}.
Based on the results in \cite{Cui-Estimation-2007}, the minimum achievable distortion in the uncoded homogeneous sensing system can be given by the following proposition.
\begin{proposition}\label{prop:homo_af}
    For an uncoded homogeneous sensing system, the minimum achievable distortion is given by
    \begin{align} \label{rt:homo_af}
        D_{\text{Uncoded}} = \frac{\sigma^2_\theta}{K} \left( \frac{1}{\gamma_{\text{ob}}}
                    + \frac{1}{\gamma_{\text{ch}}}
                    + \frac{1}{\gamma_{\text{ob}}\gamma_{\text{ch}}}
                    \right).
    \end{align}
\end{proposition}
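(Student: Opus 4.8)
The plan is to reuse the two-step recipe that produced \textit{Theorem}~\ref{th:distortion}: first pin down the covariance matrix $\boldsymbol{\Sigma}_{\boldsymbol{n}}$ of the total noise seen at the fusion center, then substitute into the BLUE distortion formula \eqref{eq:mmse_blue}. The decisive difference from the coded case is that for amplify-and-forward the effective noise components are mutually independent across nodes, so $\boldsymbol{\Sigma}_{\boldsymbol{n}}$ turns out to be a scaled identity rather than a full matrix, and the whole computation collapses.

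First I would write down the recovery model. Node $k$ scales its observation $\widetilde{\theta}_k=\theta+n_{\text{ob}k}$ by the amplitude factor $\sqrt{\alpha}$, with $\alpha$ as in \eqref{df:amplif_gain}, so that the transmitted power equals $P$; the fusion center then receives $y_k=\sqrt{\alpha}\,(\theta+n_{\text{ob}k})+n_{\text{ch}k}$, where $n_{\text{ch}k}$ is zero-mean with variance $\sigma^2_{\text{ch}}$ (unit channel gain by Assumption~A4). Rescaling by $\sqrt{\alpha}$ gives a recovery $x_k=\theta+n_k$ with effective noise $n_k=n_{\text{ob}k}+\alpha^{-1/2}n_{\text{ch}k}$. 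Because $\{n_{\text{ob}k}\}$ are i.i.d., $\{n_{\text{ch}k}\}$ are i.i.d., and these two families and $\theta$ are mutually independent, the $n_k$ are i.i.d. with common variance $\sigma^2_{\text{tot}}=\sigma^2_{\text{ob}}+\sigma^2_{\text{ch}}/\alpha$. Hence $\boldsymbol{\Sigma}_{\boldsymbol{n}}=\sigma^2_{\text{tot}}\,\textbf{I}_K$, and \eqref{eq:mmse_blue} yields $D=(\textbf{1}_K^{\text{T}}\boldsymbol{\Sigma}_{\boldsymbol{n}}^{-1}\textbf{1}_K)^{-1}=\sigma^2_{\text{tot}}/K$, the optimum being attained (as expected) by the uniform weights $f_k=1/K$.

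It then remains to re-express $\sigma^2_{\text{tot}}$ through the SNRs. Using $\alpha^{-1}=(\sigma^2_\theta+\sigma^2_{\text{ob}})/P$, $\sigma^2_{\text{ch}}/P=1/\gamma_{\text{ch}}$ and $\sigma^2_{\text{ob}}=\sigma^2_\theta/\gamma_{\text{ob}}$, a one-line computation gives $\sigma^2_{\text{tot}}=\sigma^2_\theta\big(1/\gamma_{\text{ob}}+1/\gamma_{\text{ch}}+1/(\gamma_{\text{ob}}\gamma_{\text{ch}})\big)$, and dividing by $K$ produces exactly \eqref{rt:homo_af}. There is no genuine obstacle here — this is a direct specialization of the BLUE formula to a diagonal covariance; the only points that need care are the amplify-and-forward normalization (tracking the $\sqrt{\alpha}$ factor against the power constraint) and the explicit verification that the $n_k$ are independent, which is precisely what makes $\boldsymbol{\Sigma}_{\boldsymbol{n}}$ diagonal and sets this case apart from the coded one. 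Alternatively one may simply invoke the corresponding result of \cite{Cui-Estimation-2007} and check that it matches the present notation.
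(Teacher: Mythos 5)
Your derivation is correct, and it reaches \eqref{rt:homo_af} by a different route than the paper: the paper's proof simply imports the heterogeneous amplify-and-forward distortion \eqref{eq:distortion_af_cui} from \cite{Cui-Estimation-2007} and specializes it to identical nodes by elementary algebra, whereas you rebuild the result from first principles — writing the received signal $y_k=\sqrt{\alpha}(\theta+n_{\text{ob}k})+n_{\text{ch}k}$ with $\alpha$ from \eqref{df:amplif_gain}, rescaling to $x_k=\theta+n_k$ with $n_k=n_{\text{ob}k}+\alpha^{-1/2}n_{\text{ch}k}$, observing that the $n_k$ are i.i.d.\ so that $\boldsymbol{\Sigma}_{\boldsymbol{n}}=\sigma^2_{\text{tot}}\textbf{I}_K$, and then feeding this into the BLUE formula \eqref{eq:mmse_blue}. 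The algebra checks out: $\sigma^2_{\text{tot}}=\sigma^2_{\text{ob}}+\sigma^2_{\text{ch}}(\sigma^2_\theta+\sigma^2_{\text{ob}})/P=\sigma^2_\theta\bigl(1/\gamma_{\text{ob}}+1/\gamma_{\text{ch}}+1/(\gamma_{\text{ob}}\gamma_{\text{ch}})\bigr)$, and $D=\sigma^2_{\text{tot}}/K$ with uniform weights. Your approach buys self-containedness and makes transparent exactly why the uncoded case collapses (independent effective noises, diagonal covariance), which is the structural contrast with the correlated quantization noises of \textit{Theorem}~\ref{th:distortion}; the paper's approach is shorter and keeps the homogeneous formula visibly consistent with the heterogeneous expression \eqref{eq:distortion_af_cui} that is reused later (e.g., in \textit{Theorems}~\ref{th:condition_hetero} and \ref{th:distortion_hybrid}). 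The only caveat, common to both arguments, is that ``minimum achievable'' is meant within the paper's estimation model (BLUE with the fixed gain $\alpha$), which your proof respects.
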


\begin{proof}
    As shown in \cite{Cui-Estimation-2007}, the distortion of an uncoded sensing system is given by
    \begin{equation} \label{eq:distortion_af_cui}
        D_{\text{Uncoded}} = \sigma^2_\theta \left( \sum_{k=1}^K \frac{\alpha'_ks_k}
                {\gamma^{-1}_{\text{ob}k} \alpha'_k s_k +1 } \right)^{-1},
    \end{equation}
    in which $\alpha'_k=P_k/ (1+\gamma^{-1}_{\text{ob}k})$ and $s_k=1/\sigma^2_{\text{ch}k}$.
        With some mathematical manipulations, \textit{Proposition} \ref{prop:homo_af} can be proved.
\end{proof}

As the number $K$ of nodes goes to infinity, it is clear that the distortion of uncoded homogeneous sensing systems would be zero. That is,
    \begin{align} \label{rt:homo_af_K_inf}
        D_{\text{Uncoded}}(\infty) = 0.
    \end{align}

As the observation SNRs or/and the channel SNRs of the nodes go to infinity or zero, the achievable estimation distortion of the uncoded homogeneous system  is  summarized in the following table.

\begin{savenotes}  
\begin{table}[htbp]
\centering
    \caption{Distortion $D_{\text{Uncoded}}$ in limiting cases.}\label{tab:distortion_limiting_hetero}
\begin{tabular}{|c|c|c|c|}
\bottomrule
 \diagbox{$\gamma_{\text{ob}}$}{$\gamma_{\text{ch}}$} &  $\infty$ & finite & $0$\\
\hline
 $\infty$ &  $0$ & $\frac{1}{K\gamma_{\text{ch}}}\sigma^2_\theta$  & $\sigma^2_\theta$\\
\hline
finite  & $\frac{1}{K\gamma_{\text{ob}}}\sigma^2_\theta$ & \eqref{rt:homo_af} & $\sigma^2_\theta$\\
\hline
$0$    & $\infty$ & $\infty$ & $\infty$\\
\toprule
\end{tabular}
\end{table}
\end{savenotes}

\subsection{Coded Systems Versus Uncoded Systems}\label{subsec:4c}
Based on \textit{Theorem} \ref{th:homo_digital} and \textit{Proposition} \ref{prop:homo_af}, we then investigate the superiority of the coded scheme with respect to the uncoded scheme.

\begin{theorem}\label{th:comparison}
    In a homogeneous sensing system with $K$ nodes and individual power constraints, the coded scheme outperforms the uncoded scheme, i.e., $D_{\text{Coded}} < D_{\text{Uncoded}}$, if
    \begin{itemize}
      \item $K=1$ or $K=2$;
      \item $K\geq 3$ and the observation SNR is small, i.e.,
                \begin{equation} \label{rt:compare_condt_2}
                        \gamma_{\text{ob}} <  \frac{(\gamma_{\text{ch}}+1)(2\gamma_{\text{ch}}+1)} {\max( (K-2)\gamma_{\text{ch}}-1, 0^+) }.
                \end{equation}
    \end{itemize}
\end{theorem}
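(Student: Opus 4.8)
The plan is to prove the theorem by directly comparing the two closed-form expressions, namely $D_{\text{Coded}}$ from \textit{Theorem}~\ref{th:homo_digital} and $D_{\text{Uncoded}}$ from \textit{Proposition}~\ref{prop:homo_af}. Since both distortions carry the common positive prefactor $\sigma^2_\theta/K$, the inequality $D_{\text{Coded}} < D_{\text{Uncoded}}$ is equivalent to
\begin{equation*}
\frac{\gamma_{\text{ch}}}{(1+\gamma_{\text{ch}})\gamma_{\text{ob}}} + \frac{K+\gamma_{\text{ch}}}{(1+\gamma_{\text{ch}})^2} < \frac{1}{\gamma_{\text{ob}}} + \frac{1}{\gamma_{\text{ch}}} + \frac{1}{\gamma_{\text{ob}}\gamma_{\text{ch}}}.
\end{equation*}
First I would separate the terms proportional to $1/\gamma_{\text{ob}}$ from the rest. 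Collecting the $1/\gamma_{\text{ob}}$ terms and simplifying $\frac{1}{\gamma_{\text{ob}}} + \frac{1}{\gamma_{\text{ob}}\gamma_{\text{ch}}} - \frac{\gamma_{\text{ch}}}{(1+\gamma_{\text{ch}})\gamma_{\text{ob}}}$ yields $\frac{2\gamma_{\text{ch}}+1}{\gamma_{\text{ch}}(1+\gamma_{\text{ch}})}\cdot\frac{1}{\gamma_{\text{ob}}}$; bringing the remaining (SNR-$\gamma_{\text{ob}}$-free) terms to one side and combining over a common denominator leaves a single fraction whose numerator simplifies, after the $\gamma_{\text{ch}}^2$ terms cancel, to $(K-2)\gamma_{\text{ch}}-1$.

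After multiplying both sides by the manifestly positive quantity $\gamma_{\text{ch}}(1+\gamma_{\text{ch}})$ — all SNRs are strictly positive, so no inequality direction is reversed — the condition collapses to the compact form
\begin{equation*}
\frac{(K-2)\gamma_{\text{ch}}-1}{1+\gamma_{\text{ch}}} < \frac{2\gamma_{\text{ch}}+1}{\gamma_{\text{ob}}}.
\end{equation*}
From here I would split into two cases. If $(K-2)\gamma_{\text{ch}} - 1 \le 0$ — which holds unconditionally when $K=1$ or $K=2$, and also when $K\ge 3$ with $\gamma_{\text{ch}}\le 1/(K-2)$ — then the left-hand side is non-positive while the right-hand side is strictly positive, so the inequality holds for every $\gamma_{\text{ob}}>0$; this gives the first bullet. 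If instead $(K-2)\gamma_{\text{ch}} - 1 > 0$, I cross-multiply (again all quantities positive) and solve for $\gamma_{\text{ob}}$, obtaining $\gamma_{\text{ob}} < \frac{(2\gamma_{\text{ch}}+1)(1+\gamma_{\text{ch}})}{(K-2)\gamma_{\text{ch}}-1}$. Writing the denominator as $\max\big((K-2)\gamma_{\text{ch}}-1,\,0^+\big)$ merges the two cases into the single statement \eqref{rt:compare_condt_2}, since the first case corresponds to the bound degenerating to $\gamma_{\text{ob}} < \infty$.

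The argument is essentially a chain of algebraic reductions, so there is no deep obstacle; the only things to watch are that every denominator cleared is positive under $\gamma_{\text{ob}},\gamma_{\text{ch}}>0$ (so that no inequality is accidentally flipped), and that the boundary numerator $(K-2)\gamma_{\text{ch}}-1$ is tracked correctly through the cancellation of the $\gamma_{\text{ch}}^2$ terms — this is the only place a sign slip could creep in. The conceptual point worth emphasizing is that the sign of $(K-2)\gamma_{\text{ch}}-1$ is precisely what decides whether the coded scheme wins for all observation SNRs (small $K$, or weak channels) or only below a finite threshold on $\gamma_{\text{ob}}$, which is why both bullet points of the theorem fall out of the same inequality.
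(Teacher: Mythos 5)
Your proposal is correct and follows essentially the same route as the paper's proof: both reduce $D_{\text{Coded}} < D_{\text{Uncoded}}$ by direct algebra to the inequality $(K-2)\gamma_{\text{ch}}-1 < \frac{1}{\gamma_{\text{ob}}}(\gamma_{\text{ch}}+1)(2\gamma_{\text{ch}}+1)$ and then split on the sign of $(K-2)\gamma_{\text{ch}}-1$, with your simplification of the $1/\gamma_{\text{ob}}$ terms and the cancellation leaving $(K-2)\gamma_{\text{ch}}-1$ checking out exactly. The only difference is cosmetic (you compare the expressions after cancelling the common factor $\sigma^2_\theta/K$, whereas the paper writes out $\Delta D = D_{\text{Coded}}-D_{\text{Uncoded}}$ with its prefactor), so no further comment is needed.
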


\begin{proof}
    See \textit{Appendix} \ref{prf:comparison}
\end{proof}

\begin{remark}
    For $K\geq 3$, condition \eqref{rt:compare_condt_2} is equivalent to
    \begin{equation} \label{rt:compare_condt_1}
    \left\{ \hspace{-1.5mm}
    \begin{array}{cc}
        \gamma_{\text{ch}}>0,  &\text{if}~\gamma_{\text{ob}}< \gamma^*_{\text{ob}}, \\
        \gamma_{\text{ch}}>\gamma_{\text{ch}2}~\text{or}~\gamma_{\text{ch}1}>\gamma_{\text{ch}}>0, &\text{if}~\gamma_{\text{ob}}> \gamma^*_{\text{ob}},
    \end{array}
    \right.
    \end{equation}
    in which
    \begin{align} \label{rt:gamma_star}
        \gamma^*_{\text{ob}}&=\frac{3K-2+2\sqrt{ 2(K^2-K) }}  {(K-2)^2}, \\
        \gamma_{\text{ch}1}& = \frac{  (K-2)\gamma_{\text{ob}}-3 - \sqrt{(K-2)^2\gamma^2_\text{ob} - (6K-4)\gamma_\text{ob}+1}  } {4}, \\
        \gamma_{\text{ch}2}& = \frac{  (K-2)\gamma_{\text{ob}}-3 + \sqrt{(K-2)^2\gamma^2_\text{ob} - (6K-4)\gamma_\text{ob}+1}  } {4},
    \end{align}
   as shown in Fig. \ref{fig:coded_opt_region}.
\end{remark}

\begin{remark}\label{rmk:d_k_small}
    For a given SNR pair $(\gamma_{\text{ob}}, \gamma_{\text{ch}})$, the coded scheme outperforms the uncoded scheme if the $K$ satisfies
    \begin{equation}\label{rt:rmk_d_k_small}
        K\leq 2+ \frac{1}{\gamma_{\text{ch}}} +\frac{(\gamma_{\text{ch}}+1)(2\gamma_{\text{ch}}+1)}{\gamma_{\text{ob}}\gamma_{\text{ch}}},
    \end{equation}
\end{remark}
as shown in Fig. \ref{fig:coded_opt_region_vs_K}.

From \textit{Theorem} \ref{th:comparison} and Fig. \ref{fig:coded_opt_r}, it is clear that the coded scheme is more capable of combating noises.
    In particular, we have the following observations.
\begin{enumerate}
  \item [1)] \textit{When} $K$ \textit{is small }(e.g., $K=1,2$), the coded scheme overwhelmingly outperforms the uncoded scheme.
        As $K$ gets larger, however, the coded scheme loses superiority gradually.
             First,  while channel noises contribute to estimation distortion almost directly (cf. \eqref{rt:homo_af}) in uncoded systems, channel noises contribute not too much (cf. \eqref{rt:sigma_qu}, \eqref{rt:homo_digital}) in coded systems, which is beneficial for systems with a few nodes.
        When $K=1$, for example, we have
         \begin{equation} \label{rmk:d_k_1}
                D_{\text{Coded}}= \sigma^2_{\text{ob}} +\frac{\sigma^2_\theta-\sigma^2_{\text{ob}}}{1+\gamma_{\text{ch}}},
        \end{equation}
        which converges to $\sigma^2_\theta$ as $\gamma_{\text{ch}}$ goes to zero.
            For uncoded systems, however, $D_{\text{Unoded}}$ (cf. \eqref{rt:homo_af}) goes to infinity as $\gamma_{\text{ch}}$ goes to zero.
       Second, the margining gain of using one more node in coded systems becomes smaller and smaller as $K$ gets larger.
            The reason is that the correlated quantization noises would strengthen each other at the fusion center.
       In uncoded systems, however, sensing with one more node only brings some additional independent noise.
            The distortion of uncoded systems, therefore, is monotonically decreasing with $K$ and would be smaller than that of coded systems when $K$ is large.

\item [2)] \textit{In the low observation SNR regime} (cf. \eqref{rt:compare_condt_2}), the coded scheme performs better than the uncoded scheme.
        We note that the minimum value of the right hand side of \eqref{rt:compare_condt_2} is $\gamma^*_{\text{ob}}$ (cf. \eqref{rt:gamma_star})  and is achieved with $\gamma_{\text{ch}}=(\sqrt{(K^2-K)/2}+1)/(K-2)$.
     If $\gamma_{\text{ob}}< \gamma^*_{\text{ob}}$ is satisfied, therefore, the coded scheme would outperform the uncoded scheme regardless of node number $K$ and channel SNR $\gamma_{\text{ch}}$.
            For example, in an uncoded system with $K=1$ node (\eqref{rt:compare_condt_2} also applies), the received signal at the fusion center is $y=\sqrt{\alpha}\theta +\sqrt{\alpha} n_{\text{ob}} +n_{\text{ch}} $ and the estimation is $\hat{\theta} = y/\sqrt{\alpha} = \theta + n_{\text{ob}} +n_{\text{ch}}/\sqrt{\alpha}$, in which $\alpha$ is the amplifying gain given in \eqref{df:amplif_gain}.
        Since $\sigma^2_{\text{ob}}$ is very large in the low observation SNR regime, it is clear that $\alpha$ would be very small.
    The distortion $D_{\text{Uncoded}}= \mathbb{E}[(n_{\text{ob}} +n_{\text{ch}}/\sqrt{\alpha})^2]$, therefore, would be very large.
        On the other hand, the estimation noise of a single-node coded system is ${n}={n}_{\text{qu}} - {n}_{\text{ob}}$ and the distortion $D_{\text{Coded}}$ (cf. \eqref{rmk:d_k_1}) can even be smaller than $\sigma^2_{\text{ob}}$, as well as than the distortion of an uncoded system.

  \item [3)] \textit{In the low channel SNR regime} ($\gamma_{\text{ch}}<1/(K-2)$, cf. \eqref{rt:compare_condt_2}), the coded scheme outperforms the uncoded scheme regardless of the observation SNR.
        Note that in coded systems, the channel noise contributes to estimation distortion through quantization noise $\sigma^2_{\text{qu}}=(\sigma^2_\theta+\sigma^2_\text{ob}) /(1+\gamma_{\text{ch}})$, which is upper bounded by $\sigma^2_\theta+\sigma^2_\text{ob}$ and would be smaller than $\sigma^2_{\text{ch}}$ in the low channel SNR regime.
    In an uncoded system, however,  the channel noises contribute to the estimation distortion more directly, which goes to infinity as the channel SNR goes to zero.

  \item [4)]  When the channel SNR is increased, the performance difference between coded systems and uncoded systems becomes smaller and smaller, and goes to zero gradually (cf. \eqref{apx:delta_D}).
  $\hfill{} \blacksquare$
\end{enumerate}

Therefore, the coded scheme is suggested if the observation SNR is low and/or the channel SNR is low;
    in the high channel SNR regime, the uncoded scheme is suggested, since it has almost the same performance as the coded scheme but is much easier to implement.

\subsection{Sensing with Total Power Constraints}

Next, we consider a scenario in which the total transmit power of the $K$ nodes is constrained, i.e., $KP=P_{\text{total}}$ is fixed.
    In particular, we denote the total channel SNR as
    \begin{equation}\label{df:r_total}
        \gamma_{\text{total}} = \frac{P_{\text{total}}}{\sigma^2_{\text{ch}}}.
    \end{equation}

\begin{proposition}\label{prop:homo_digital_total}
    For a coded homogeneous sensing system with $K$ nodes and the total power constraint, the minimum achievable distortion is given by
    \begin{align} \label{rt:homo_digital_total}
        D_{\text{Coded}}^{\text{total}}= \sigma^2_\theta\left(
                    \frac{\gamma_{\text{total}}} {K(K+\gamma_{\text{total}})\gamma_{\text{ob}}}
                    + \frac{K^2+ \gamma_{\text{total}}} {(K+\gamma_{\text{total}})^2}
                    \right).
    \end{align}
\end{proposition}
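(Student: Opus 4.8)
The plan is to reduce Proposition~\ref{prop:homo_digital_total} to a direct substitution into Theorem~\ref{th:homo_digital}. Under the total power constraint we have $P=P_{\text{total}}/K$, so the per-node channel SNR becomes $\gamma_{\text{ch}}=P/\sigma^2_{\text{ch}}=P_{\text{total}}/(K\sigma^2_{\text{ch}})=\gamma_{\text{total}}/K$, while the observation SNR $\gamma_{\text{ob}}$ is unchanged. Thus the system is still a coded homogeneous sensing system, and the only thing I need to do is evaluate $D_{\text{Coded}}$ from \eqref{rt:homo_digital} at the point $\gamma_{\text{ch}}=\gamma_{\text{total}}/K$ and simplify.

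Concretely, I would start from
\begin{equation*}
    D_{\text{Coded}} = \frac{\sigma^2_\theta}{K}
             \left( \frac{\gamma_{\text{ch}}}{(1+\gamma_{\text{ch}})\gamma_{\text{ob}}}
                        + \frac{K+\gamma_{\text{ch}}}{(1+\gamma_{\text{ch}})^2}\right),
\end{equation*}
substitute $\gamma_{\text{ch}}=\gamma_{\text{total}}/K$, and note that $1+\gamma_{\text{ch}}=(K+\gamma_{\text{total}})/K$. The first term becomes $\frac{1}{K}\cdot\frac{(\gamma_{\text{total}}/K)}{((K+\gamma_{\text{total}})/K)\gamma_{\text{ob}}}=\frac{\gamma_{\text{total}}}{K(K+\gamma_{\text{total}})\gamma_{\text{ob}}}$, which already matches the claimed first term. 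For the second term, $K+\gamma_{\text{ch}}=K+\gamma_{\text{total}}/K=(K^2+\gamma_{\text{total}})/K$ and $(1+\gamma_{\text{ch}})^2=(K+\gamma_{\text{total}})^2/K^2$, so $\frac{1}{K}\cdot\frac{(K^2+\gamma_{\text{total}})/K}{(K+\gamma_{\text{total}})^2/K^2}=\frac{K^2+\gamma_{\text{total}}}{(K+\gamma_{\text{total}})^2}$, which matches the claimed second term. Adding the two recovers \eqref{rt:homo_digital_total}.

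Since Theorem~\ref{th:homo_digital} already gives the optimal (minimum achievable) distortion for the coded homogeneous system with a given $\gamma_{\text{ch}}$, no re-optimization is needed: the BLUE weighting and the rate-distortion/capacity tradeoff are unaffected by how $P$ is obtained, only by its value. Hence there is essentially no genuine obstacle here beyond keeping the algebra of the substitution clean; the main point to be careful about is the bookkeeping of the factor $K$ appearing both as the number of nodes (explicit prefactor $1/K$ and inside $K+\gamma_{\text{ch}}$) and implicitly through $\gamma_{\text{ch}}=\gamma_{\text{total}}/K$. I would present the proof as a one-line reduction: "Under the total power constraint, $\gamma_{\text{ch}}=\gamma_{\text{total}}/K$; substituting this into \eqref{rt:homo_digital} of Theorem~\ref{th:homo_digital} and simplifying using $1+\gamma_{\text{ch}}=(K+\gamma_{\text{total}})/K$ yields \eqref{rt:homo_digital_total}," optionally relegating the two-term simplification above to an appendix if the journal style wants it spelled out.
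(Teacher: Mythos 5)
Your proof is correct and is exactly the paper's route: the paper's own proof simply states that the proposition "immediately follows Theorem \ref{th:homo_digital} and equation \eqref{df:r_total}," i.e., substituting $\gamma_{\text{ch}}=\gamma_{\text{total}}/K$ into \eqref{rt:homo_digital}, which is what you carry out (with the algebra correctly checked).
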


\begin{proof}
    \textit{Proposition} \ref{prop:homo_digital_total} immediately follows \textit{Theorem} \ref{th:homo_digital} and equation \eqref{df:r_total}.
\end{proof}

It can be verified that $D_{\text{Coded}}^{\text{total}}$ is non-convex in node number $K$ and converges to
\begin{equation}\label{rt:homo_digit_total_inf}
    D_{\text{Coded}}^{\text{total}} (\infty)=\sigma^2_\theta
\end{equation}
 as $K$ goes to infinity.
    In particular, the minimum distortion is achieved when $K$ is neither too small nor too large.

For uncoded homogeneous sensing systems with the total transmit power constraint, the corresponding distortion is
    \begin{align} \label{rt:homo_af_total}
        D_{\text{Uncoded}}^{\text{total}}= \sigma^2_\theta\left( \frac{1}{K\gamma_{\text{ob}}} +
                    \frac{1} {\gamma_{\text{total}}}
                    + \frac{1} {\gamma_{\text{ob}}\gamma_{\text{total}}}
                    \right),
    \end{align}
    which converges to
    \begin{equation}\label{rt:homo_af_total_inf}
        D_{\text{Uncoded}}^{\text{total}} (\infty)=\sigma^2_\theta\left(\frac{1} {\gamma_{\text{total}}}
                    + \frac{1} {\gamma_{\text{ob}}\gamma_{\text{total}}}\right)
    \end{equation}
     as $K$ goes to infinity.

Similar to \textit{Theorem} \ref{th:comparison}, we have the following proposition on the distortion of homogeneous systems with the total power constraint and coded/uncoded schemes.
\begin{proposition}\label{prop:comparison_total_p}
    In a homogeneous sensing system with $K$ nodes and the total power constraint, the coded scheme outperforms the uncoded scheme, i.e., $D_{\text{Coded}}^{\text{total}} < D_{\text{Uncoded}}^{\text{total}}$, if
    \begin{itemize}
      \item $K=1$ or $K=2$;
      \item $K\geq 3$ and the observation SNR satisfies
    \end{itemize}
    \begin{equation} \label{rt:compare_condt_total_2}
            \gamma_{\text{ob}} < \frac{(\gamma_{\text{total}}+K)(2\gamma_{\text{total}}+K)}
            {\max( (K^2-2K)\gamma_{\text{total}}-K^2, 0^+)}.
    \end{equation}

\end{proposition}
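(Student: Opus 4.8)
The plan is to reduce \textit{Proposition}~\ref{prop:comparison_total_p} to \textit{Theorem}~\ref{th:comparison} by the substitution $\gamma_{\text{ch}}=\gamma_{\text{total}}/K$. Under the total power constraint each node transmits with power $P=P_{\text{total}}/K$, so its per-channel SNR is exactly $\gamma_{\text{ch}}=P/\sigma^2_{\text{ch}}=\gamma_{\text{total}}/K$ by \eqref{df:r_total}. Plugging this into \eqref{rt:homo_digital} reproduces \eqref{rt:homo_digital_total} (this is precisely how \textit{Proposition}~\ref{prop:homo_digital_total} was derived from \textit{Theorem}~\ref{th:homo_digital}), and plugging it into \eqref{rt:homo_af} reproduces \eqref{rt:homo_af_total}. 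Hence the inequality $D_{\text{Coded}}^{\text{total}}<D_{\text{Uncoded}}^{\text{total}}$ is literally the inequality $D_{\text{Coded}}<D_{\text{Uncoded}}$ of \textit{Theorem}~\ref{th:comparison}, evaluated at $\gamma_{\text{ch}}=\gamma_{\text{total}}/K$.

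With this identification in hand, I would first invoke \textit{Theorem}~\ref{th:comparison}: for $K=1$ and $K=2$ the coded scheme wins for \emph{every} $\gamma_{\text{ch}}$, in particular for $\gamma_{\text{ch}}=\gamma_{\text{total}}/K$, which yields the first bullet. For $K\geq 3$, \textit{Theorem}~\ref{th:comparison} requires $\gamma_{\text{ob}}<(\gamma_{\text{ch}}+1)(2\gamma_{\text{ch}}+1)/\max((K-2)\gamma_{\text{ch}}-1,0^+)$. Substituting $\gamma_{\text{ch}}=\gamma_{\text{total}}/K$ gives numerator $(\gamma_{\text{total}}+K)(2\gamma_{\text{total}}+K)/K^2$ and $(K-2)\gamma_{\text{ch}}-1=\big((K^2-2K)\gamma_{\text{total}}-K^2\big)/K^2$, so the common factor $K^{-2}$ cancels between the numerator and the argument of the $\max$, producing exactly \eqref{rt:compare_condt_total_2}. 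The $\max(\cdot,0^+)$ absorbs the regime $(K-2)\gamma_{\text{total}}\leq K$, where the right-hand side is $+\infty$ and the condition is vacuous — consistently with the fact that for $\gamma_{\text{ch}}\leq 1/(K-2)$ the coded scheme always wins.

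If a self-contained argument is preferred, I would instead compute $\Delta D=D_{\text{Uncoded}}^{\text{total}}-D_{\text{Coded}}^{\text{total}}$ directly from \eqref{rt:homo_digital_total} and \eqref{rt:homo_af_total}. Collecting the $1/\gamma_{\text{ob}}$ terms uses the identity $\tfrac{1}{K}-\tfrac{\gamma_{\text{total}}}{K(K+\gamma_{\text{total}})}=\tfrac{1}{K+\gamma_{\text{total}}}$, and putting the rest over the denominator $\gamma_{\text{total}}(K+\gamma_{\text{total}})^2$ gives
\begin{equation*}
\frac{\Delta D}{\sigma^2_\theta}=\frac{1}{\gamma_{\text{total}}(K+\gamma_{\text{total}})}\left(\frac{2\gamma_{\text{total}}+K}{\gamma_{\text{ob}}}+\frac{K\bigl(K-(K-2)\gamma_{\text{total}}\bigr)}{K+\gamma_{\text{total}}}\right).
\end{equation*}
The prefactor is positive, so $\Delta D>0$ iff the bracket is positive: if $(K-2)\gamma_{\text{total}}\leq K$ the bracket is a sum of nonnegative terms with a strictly positive first term (this covers $K=1,2$ and, for $K\geq3$, the vacuous-condition regime), and otherwise solving the resulting linear inequality in $\gamma_{\text{ob}}$ returns exactly \eqref{rt:compare_condt_total_2} with the positive denominator matching the $\max$.

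Both routes are essentially bookkeeping once the right groupings are spotted, and I do not anticipate a genuine obstacle. The only point requiring care is the handling of the $\max(\cdot,0^+)$ / ``$\gamma_{\text{ob}}$ unconstrained'' branch, i.e., checking that the $K=1,2$ case and the $K\geq3$, $(K-2)\gamma_{\text{total}}\leq K$ case are both correctly subsumed by the stated condition — which the reduction to \textit{Theorem}~\ref{th:comparison} handles automatically.
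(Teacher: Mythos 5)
Your proposal is correct and follows exactly the paper's own argument: the paper's proof is a one-line remark offering precisely the two routes you elaborate, namely substituting $\gamma_{\text{ch}}=\gamma_{\text{total}}/K$ (via \eqref{df:r_total}) into \textit{Theorem}~\ref{th:comparison}, or directly comparing \eqref{rt:homo_digital_total} with \eqref{rt:homo_af_total}. Your algebra in both routes checks out, including the cancellation of the $K^{-2}$ factor and the handling of the $\max(\cdot,0^+)$ branch.
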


\begin{proof}
    The proposition can be readily proved either by using \eqref{df:r_total} and \textit{Theorem} \ref{th:comparison} or by combining \eqref{rt:homo_digital_total} and \eqref{rt:homo_af_total}.
\end{proof}

Thus, the comparison between total power constrained coded systems and uncoded systems follows  similar rules with that under the individual power constraint.

\section{ Heterogeneous Sensing and Hybrid Coding}\label{sec:5_hybrid}
In this section,  we investigate the condition for the coded scheme to be optimal in heterogeneous sensing systems with  individual power constraints.
    For a more general case in which the hybrid coding is used, i.e., both the coded scheme and the uncoded scheme are used in the same system, we shall present the corresponding distortion limit and some efficient algorithms for near-optimal policy searching.

\subsection{Coded Optimal Heterogeneous Sensing}
For a heterogeneous sensing system defined by the number $K$ of nodes, the channel SNRs $\{ \gamma_{\text{ch}k}, k\in\mathcal{K} \}$ and the observation SNRs $\{ \gamma_{\text{ob}k}, k\in\mathcal{K} \}$, the coded scheme outperforms the uncoded scheme if the following condition is satisfied.

\begin{theorem}\label{th:condition_hetero}
    In a heterogeneous sensing system with $K$ nodes and the individual power constraint, the coded scheme outperforms the uncoded scheme if
    \begin{equation} \label{rt:condition_hetero}
        \sum_{k=1}^{K} \frac{(1+2\gamma_{\text{ch}k}) \gamma_{\text{ob}k}}
        {(1+\gamma_{\text{ch}k}+\gamma_{\text{ob}k})\gamma_{\text{ch}k}}
        >\left( \sum_{k=1}^{K} \frac{ \gamma_{\text{ob}k}}
        {1+\gamma_{\text{ch}k}+\gamma_{\text{ob}k}}
        \right)^2.
    \end{equation}
\end{theorem}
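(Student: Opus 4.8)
The plan is to reduce the comparison $D_{\text{Coded}}<D_{\text{Uncoded}}$ to a single scalar polynomial inequality by exploiting a pair of exact per-node identities relating the terms of the two distortion formulas. First I would take $D_{\text{Coded}}$ from \textit{Theorem}~\ref{th:distortion} and write it as $D_{\text{Coded}}=\sigma^2_\theta\,(A-\tfrac{B^2}{1+C})^{-1}$, with $A=\sum_k a_k$, $B=\sum_k b_k$, $C=\sum_k c_k$ and $a_k=1/\lambda_k$, $b_k=u_k/\lambda_k$, $c_k=u_k^2/\lambda_k$. Substituting \eqref{rt:u_k}--\eqref{rt:lamda_k} and abbreviating $m_k=1+\gamma_{\text{ch}k}+\gamma_{\text{ob}k}$ gives the closed forms $a_k=\frac{(1+\gamma_{\text{ch}k})^2\gamma_{\text{ob}k}}{m_k\gamma_{\text{ch}k}}$, $b_k=\frac{(1+\gamma_{\text{ch}k})\gamma_{\text{ob}k}}{m_k\gamma_{\text{ch}k}}$, $c_k=\frac{\gamma_{\text{ob}k}}{m_k\gamma_{\text{ch}k}}$. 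In parallel I would simplify the uncoded distortion \eqref{eq:distortion_af_cui}: under $g_k=1$ (Assumption~A4) one has $\alpha'_k s_k=\gamma_{\text{ch}k}\gamma_{\text{ob}k}/(1+\gamma_{\text{ob}k})$, and a short reduction gives $D_{\text{Uncoded}}=\sigma^2_\theta\,\big(\sum_k d_k\big)^{-1}$ with $d_k=\gamma_{\text{ch}k}\gamma_{\text{ob}k}/m_k$.

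The key step is two elementary identities. Writing $f_k=\gamma_{\text{ob}k}/m_k$, $F=\sum_k f_k$, $E=\sum_k d_k$, one checks $b_k-c_k=f_k$ (cancel one factor of $\gamma_{\text{ch}k}$) and $a_k-d_k=c_k+2f_k$ (clear the denominator $m_k\gamma_{\text{ch}k}$ and use $(1+\gamma_{\text{ch}k})^2-\gamma_{\text{ch}k}^2=1+2\gamma_{\text{ch}k}$). Moreover $a_k-d_k=\frac{(1+2\gamma_{\text{ch}k})\gamma_{\text{ob}k}}{m_k\gamma_{\text{ch}k}}$ is exactly the $k$-th summand on the left of \eqref{rt:condition_hetero}, and $f_k$ is the $k$-th summand of the quantity being squared on the right. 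Summing over $k$ yields $B=C+F$ and $A-E=C+2F$, so $A-E$ equals the left side of \eqref{rt:condition_hetero} and $F$ equals the base of the square on the right side.

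Finally, since $A-\tfrac{B^2}{1+C}=\sigma^2_\theta/D_{\text{Coded}}>0$ and $E>0$, the relation $D_{\text{Coded}}<D_{\text{Uncoded}}$ is equivalent to $A-\tfrac{B^2}{1+C}>E$, i.e.\ to $(A-E)(1+C)>B^2$. Substituting $A-E=C+2F$ and $B=C+F$ and expanding, the terms $C^2$ and $2CF$ cancel on both sides, leaving the equivalent inequality $C+2F>F^2$, that is $A-E>F^2$, which is precisely \eqref{rt:condition_hetero}. (The chain of equivalences in fact shows that \eqref{rt:condition_hetero} is necessary and sufficient, though the statement only claims sufficiency.)

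The only genuine computation I expect is verifying the two identities $b_k-c_k=f_k$ and $a_k-d_k=c_k+2f_k$, which are short once the common denominator $m_k\gamma_{\text{ch}k}$ is in place; the remaining steps --- simplifying $D_{\text{Uncoded}}$, the cancellation in $(A-E)(1+C)-B^2=C+2F-F^2$, and matching $A-E$ and $F$ to the two sides of \eqref{rt:condition_hetero} --- are routine bookkeeping. The one subtlety worth recording is the positivity of $A-B^2/(1+C)$, needed so that passing to reciprocals preserves the inequality direction; this is immediate because it equals $\sigma^2_\theta/D_{\text{Coded}}$, a positive quantity by \textit{Theorem}~\ref{th:distortion}.
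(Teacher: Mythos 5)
Your proposal is correct and follows essentially the same route as the paper: both compare the reciprocal distortions $\sigma^2_\theta D^{-1}_{\text{Coded}}$ and $\sigma^2_\theta D^{-1}_{\text{Uncoded}}$, reduce the comparison to the positivity of $(A-E)(1+C)-B^2$ (the bracketed quantity in the paper's \eqref{apx:equivalence}), and cancel the common terms to arrive at \eqref{rt:condition_hetero}. Your per-node identities $b_k-c_k=f_k$ and $a_k-d_k=c_k+2f_k$ simply make explicit the ``mathematical manipulations'' the paper leaves implicit, and your observation that the argument is in fact an equivalence is consistent with the paper's derivation.
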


\begin{proof}
    For heterogeneous sensing systems, the estimation distortion of the coded scheme is given by \textit{Theorem} \ref{th:distortion} while the estimation distortion of the uncoded scheme can be obtained using \eqref{eq:distortion_af_cui}.
        By comparing the two distortions, the condition shown in \eqref{rt:condition_hetero} can be obtained readily.
    For more details,  refer to \textit{Appendix} \ref{prf:condition_hetero}.
\end{proof}

It can be verified that the condition given by \eqref{rt:condition_hetero} is equivalent to
    \begin{equation} \label{rt:condition_hetero_equiverlent}
        \sum_{k=1}^{K} \frac{ \gamma_{\text{ob}k}}
        {(1+\gamma_{\text{ch}k}+\gamma_{\text{ob}k})\gamma_{\text{ch}k}} +1
        >\left( \sum_{k=1}^{K} \frac{ \gamma_{\text{ob}k}}
        {1+\gamma_{\text{ch}k}+\gamma_{\text{ob}k}} -1
        \right)^2.
    \end{equation}

\begin{remark}
From \eqref{rt:condition_hetero_equiverlent}, we observe that the uncoded scheme would perform better if:
\begin{enumerate}
  \item $K$ is large. Note that the left-hand-side increases linearly with $K$ while the right-hand-side increases quadratically with $K$ (approximately);
  \item channel SNRs are large. If $\gamma_{\text{ch}k}$ is increased, it is clear that the left-hand-side would be decreased dramatically.
\end{enumerate}
\end{remark}

\subsection{Hybrid Coding}
As shown in \textit{Theorem} \ref{th:comparison} and \textit{Theorem} \ref{th:condition_hetero}, either the coded scheme or the uncoded scheme can be optimal under  a certain conditions.
    This motivates us to consider a hybrid coding for heterogeneous sensing systems, in which  $K_1$ nodes use the coded scheme,  $K_0$ node uses the uncoded scheme, and $K_1+K_0=K$.
In particular, we present the corresponding minimum achievable distortion in the following theorem.

\begin{theorem}\label{th:distortion_hybrid}
    The minimum achievable distortion of a heterogeneous sensing system with $K_1$ coded nodes and $K_0$ uncoded nodes is given by
    \begin{align}
    \hspace{-3mm}\nonumber
        D_{\text{Hybrid}} =& \sigma^2_\theta\Bigg(
        \sum_{k=1}^{K_1} \frac{1}{\lambda_k} - \frac{1}{1+  \sum_{k=1}^{K_1} \frac{u_k^2}{ \lambda_k} }
        \Bigg(\sum_{k=1}^{K_1}\frac{u_k}{\lambda_k} \Bigg)^2  \\
        \label{rt:distortion_hetero}
        & \left. \qquad\qquad\quad + \sum_{k=1}^{K_0}\frac{1}{
                    \frac{1}{\gamma_{\text{ob}k}}
                    + \frac{1}{\gamma_{\text{ch}k}}
                    + \frac{1}{\gamma_{\text{ob}k}\gamma_{\text{ch}k}}
                    }
                \right)^{-1},
    \end{align}
    in which $u_k$ and $\lambda_k$ are given by \eqref{rt:u_k} and \eqref{rt:lamda_k}, respectively.
\end{theorem}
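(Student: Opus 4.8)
The plan is to exploit the \emph{block-diagonal} structure that the total-noise covariance matrix $\boldsymbol{\Sigma}_{\boldsymbol{n}}$ acquires once the two schemes are mixed, so that the quadratic form in \eqref{eq:mmse_blue} splits additively into a ``coded part'' and an ``uncoded part,'' each of which has already been evaluated --- the first by \textit{Theorem}~\ref{th:distortion}, the second by \eqref{eq:distortion_af_cui}.

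First I would fix an ordering in which the $K_1$ coded nodes are indexed $1,\dots,K_1$ and the $K_0$ uncoded nodes are indexed $1,\dots,K_0$, and write down the effective additive noise each node contributes to the fusion center's estimate of $\theta$. For a coded node, de-noising the test channel \eqref{eq:x_first} yields the noise $n_{\text{qu}k}-n_{\text{ob}k}$, which is a function only of $\theta$, of $n_{\text{ob}k}$, and of the node-private test-channel randomness; for an uncoded node, dividing the received signal by $\sqrt{\alpha_k}$ yields the noise $n_{\text{ob}k}+n_{\text{ch}k}/\sqrt{\alpha_k}$, which is a function only of $n_{\text{ob}k}$ and $n_{\text{ch}k}$ and, crucially, is statistically independent of $\theta$. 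Since the observation noises are i.i.d. across all nodes, the channel noises are i.i.d. and independent of everything else, and (by \eqref{rt:cond_independ}) the only coupling among the coded-node noises is through $\theta$, every cross-covariance between a coded-node noise and an uncoded-node noise vanishes, and the uncoded block is itself diagonal. Hence $\boldsymbol{\Sigma}_{\boldsymbol{n}}=\mathrm{diag}(\boldsymbol{\Sigma}^{(1)},\boldsymbol{\Sigma}^{(0)})$, where $\boldsymbol{\Sigma}^{(1)}$ is the $K_1\times K_1$ covariance of the coded-node noises --- which, because those correlations are fixed by $\theta$ and the per-node noises alone, is exactly the matrix described by \textit{Proposition}~\ref{prop:sigma_n} restricted to those $K_1$ nodes --- and $\boldsymbol{\Sigma}^{(0)}$ is the $K_0\times K_0$ diagonal covariance of the uncoded-node noises.

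The inverse of a block-diagonal matrix is block-diagonal, so $\textbf{1}_K^{\text{T}}\boldsymbol{\Sigma}_{\boldsymbol{n}}^{-1}\textbf{1}_K=\textbf{1}_{K_1}^{\text{T}}(\boldsymbol{\Sigma}^{(1)})^{-1}\textbf{1}_{K_1}+\textbf{1}_{K_0}^{\text{T}}(\boldsymbol{\Sigma}^{(0)})^{-1}\textbf{1}_{K_0}$. Applying \eqref{eq:mmse_blue} to the $K_1$-node coded sub-system identifies the first summand with $\sigma_\theta^{-2}$ times the bracketed expression of \textit{Theorem}~\ref{th:distortion} (with $u_k,\lambda_k$ as in \eqref{rt:u_k}, \eqref{rt:lamda_k}); applying \eqref{eq:distortion_af_cui} to the $K_0$-node uncoded sub-system --- substituting $\alpha'_k=P_k/(1+\gamma_{\text{ob}k}^{-1})$, $s_k=1/\sigma^2_{\text{ch}k}$ and simplifying as in the proof of \textit{Proposition}~\ref{prop:homo_af} --- identifies the second summand with $\sigma_\theta^{-2}\sum_{k=1}^{K_0}\big(\gamma_{\text{ob}k}^{-1}+\gamma_{\text{ch}k}^{-1}+\gamma_{\text{ob}k}^{-1}\gamma_{\text{ch}k}^{-1}\big)^{-1}$. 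Substituting both into $D_{\text{Hybrid}}=(\textbf{1}_K^{\text{T}}\boldsymbol{\Sigma}_{\boldsymbol{n}}^{-1}\textbf{1}_K)^{-1}$ gives \eqref{rt:distortion_hetero}.

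The step requiring real care, rather than bookkeeping, is the block-diagonality: I must justify that ``de-amplify, then run BLUE'' loses no optimality for the uncoded nodes (so the effective-noise reduction is legitimate and the joint $K$-node BLUE is the one in \eqref{eq:mmse_blue}), and that the effective uncoded noise is genuinely uncorrelated both with $\theta$ and with every coded-node noise, so that the joint estimator truly separates into two independent estimation problems. Once that is in place, the rest is reassembly of results already proved.
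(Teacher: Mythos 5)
Your proposal is correct and follows essentially the same route as the paper's own proof: write the total noise as $[\boldsymbol{n}_{\text{c}}^{\text{T}},\boldsymbol{n}_{\text{u}}^{\text{T}}]^{\text{T}}$, observe the cross-covariance blocks vanish so $\boldsymbol{\Sigma}$ (and hence its inverse) is block-diagonal, and then evaluate $\textbf{1}^{\text{T}}\boldsymbol{\Sigma}^{-1}\textbf{1}$ blockwise via \textit{Theorem}~\ref{th:distortion} for the coded part and \eqref{eq:distortion_af_cui} for the uncoded part. If anything, your justification of the vanishing cross-block (the uncoded effective noise involves no $\theta$, and the coded-node noises couple to anything else only through $\theta$ and node-private randomness) is more careful than the paper's one-line appeal to independence of disjoint node sets.
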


\begin{proof}
    See \textit{Appendix} \ref{prf:distortion_hybrid}.
\end{proof}

With \textit{Theorem} \ref{th:distortion_hybrid} we can further minimize the distortion of the system by optimally selecting the coding scheme for each node, which is an assignment problem with binary choices.
    From \eqref{rt:distortion_hetero}, however, it is seen that by using one more coded node, the marginal gain of $\sigma^2_\theta D_{\text{Hybrid}}^{-1}$ is closely related to the number and the SNRs of the other coded nodes, and thus is not a constant.
Without a deterministic matrix of marginal gains, this assignment problem is much more difficult to solve than conventional assignment problems.
    In this paper, therefore, we shall solve the problem by the following greedy algorithms.

\subsection{Optimal and Greedy Algorithms}\label{subsec:greedy_algo}
In this subsection, we consider the following four searching algorithms for the optimal coding schemes of nodes: the global searching algorithm, the pure greedy algorithm, the group greedy algorithm, and the sorted greedy algorithm.
    To perform these algorithms at the fusion center, we assume that the fusion center can access the channel gains of each link (e.g., by channel estimation techniques in \cite{chanelesti-TSP-2010,chanelesti-TSP-2018}).
When the coding policy has been determined, the fusion center will notify each node of its coding scheme through an one-bit feedback, which indicates whether the coded scheme or the uncoded scheme should be used.

We denote a coding policy as $\varrho = \{\rho_1, \rho_2,\cdots,\rho_K\}$, in which $\rho_k=0$ if the uncoded scheme is used and $\rho_k=1$ if the coded scheme is used.
    We denote the set of all feasible policies as searching space $\mathcal{P}$.
Since each $\rho_k$ has two choices, it is clear that $|\mathcal{P}|=2^K$, which increases exponentially with the number of nodes.

\subsubsection{Global Searching}
To perform this algorithm, the distortion of each feasible policy $\varrho\in \mathcal{P}$ will be evaluated using \eqref{rt:distortion_hetero}.
    Thus, the optimal policy can be found surely.
However, the computational complexity of global searching is high, especially when $K$ is large.

\subsubsection{Pure Greedy}
The pure greedy algorithm finds its solution by iteratively updating an \textit{active set} $\mathcal{A}_{\text{p}}$  and the corresponding policy $\varrho_{\text{p}}$.
     To be specific, the algorithm expands $\mathcal{A}_{\text{p}}$ with one more node in each iteration and stops when all the nodes have been included, i.e., $\mathcal{A}_{\text{p}}=\mathcal{K}$.
In particular, the new node $k^*$ added to $\mathcal{A}_{\text{p}}$ is chosen as the one reducing the distortion most.
That is,
\begin{equation} \label{cond:local_opt_pure}
    (k^*, \rho_{k^*}) = \arg \min\limits_{k\in \mathcal{K}-\mathcal{A}_{\text{p}}, \rho_k \in\{0,1\}} D_{\text{Hybrid}} (\varrho_{\text{p}}\cup  \rho_{k}),
\end{equation}
    in which $D_{\text{Hybrid}} (\varrho_{\text{p}}\cup  \rho_{k})$ is the distortion (cf. \eqref{rt:distortion_hetero}) of the sub-system with node set $\mathcal{A}_{\text{p}}\cup k$ and policy $\varrho_{\text{p}}\cup  \rho_{k}$.

The pure greedy algorithm is shown in \textit{Algorithm} \ref{alg:p_greedy}, in which the output $\mathcal{A}_{\text{p}}$ specifies the node-order of policy $\varrho_{\text{p}}$.
    Since \eqref{rt:distortion_hetero} can be calculated with $O(K)$ operations, the computational complexity of \textit{Algorithm} \ref{alg:p_greedy} would be $O(K^3)$.

When the locally optimal node $k^*$ and its coding scheme $\rho_{k^*}$ have been determined to update the estimation distortion, we note that the denominator of \eqref{rt:distortion_hetero} does not increase additively.
    Thus, policy $\varrho_{\text{p}}\cup  \rho_{k^*}$ is most probably not globally optimal for node set $\mathcal{A}_{\text{p}}\cup k^*$.
To be specific, the globally optimal coding policy is not a simple combination of $\varrho_{\text{p}}$ (even if it is optimal for $\mathcal{A}_{\text{p}}$) and $\rho_{k^*}$, but a brand new policy reconsidered for each node of subset $\mathcal{A}_{\text{p}}\cup k^*$.
    Nevertheless, the output $D_{\text{Hybrid}} (\varrho_{\text{p}})$ of \textit{Algorithm} \ref{alg:p_greedy} approaches the distortion of global searching quite well, as shown in Figs. \ref{fig:d_ratio_vs_K}  and \ref{fig:d_ratio_vs_L}, Section \ref{sec:6_simulation}.

\begin{algorithm}[!t]
\algsetup{linenosize=\small}
\scriptsize
\caption{Pure greedy algorithm}
\begin{algorithmic}[1]\label{alg:p_greedy}
\REQUIRE ~~\\
    \STATE Set $\mathcal{A}_{\text{p}}=\emptyset$ and $\varrho_{\text{p}}=\emptyset$;
\ENSURE ~~\\
\FOR {$j=1$ to $K$}
    \STATE Find the optimal node $k^*$ and policy $\rho_{k^*}$ using \eqref{cond:local_opt_pure};
    \STATE $\mathcal{A}_{\text{p}} = \mathcal{A}_{\text{p}} \cup k^*$;  \label{step:active_set_update}
    \STATE $\varrho_{\text{p}} = \varrho_{\text{p}} \cup  \rho_{k^*}$; \label{step:policy_update}
\ENDFOR
 \STATE \textbf{Output:} $\mathcal{A}_{\text{p}}, \varrho_{\text{p}}, D_{\text{Hybrid}} (\varrho_{\text{p}})$.
\end{algorithmic}

\end{algorithm}

\subsubsection{Group Greedy}
Motivated by \cite{Soh-Group-2019}, we shall achieve a good balance between computational complexity and distortion performance by the following group greedy algorithm.
    In each iteration of the algorithm, to be specific, $L$ potential sub-policies $\{\varrho_{\text{g}l}, l=1,\cdot\cdot, L\}$ and the corresponding node sets $\{\mathcal{L}_{\text{g}l}, l=1,\cdot\cdot, L\}$ are searched to preserve the potential policy towards the optimal solution.
Based on each potential sub-policy $\varrho_{\text{g}m}$ obtained in the previous iteration, we then calculate the distortion $D_{\text{Hybrid}} (\varrho_{\text{g}m}\cup  \rho_{k})$ for subset $\mathcal{L}_{\text{g}m}\cup k$ using \eqref{rt:distortion_hetero} for all $k\in \mathcal{K}- \mathcal{L}_{\text{g}m}$ and $\rho_k\in \{0,1\}$.
Afterwards, the best $L$ node-scheme pairs are selected by
\begin{align} \label{cond:local_opt_group}
\nonumber
    \{(k_l, \rho_{k_l}),l\in&\{1,\cdot\cdot,L\}\}\\
     = \arg &\mathop{\small\texttt{Lminimum}}_
     {
    \mbox{\tiny
    $\begin{array}{c}
        m\in\{1,\cdots, L\}  \\
        k\in \mathcal{K}-\mathcal{L}_{\text{g}m}, \rho_k \in\{0,1\}
      \end{array}$}
      }
      D_{\text{Hybrid}} (\varrho_{\text{g}m}\cup  \rho_{k}),
\end{align}
in which \texttt{Lminimum} is a function sorting a sequence in ascending order and returns its smallest $L$ elements.
    Finally, the $L$ new potential policies and corresponding node sets are updated by $\varrho_{\text{g}l}= \varrho_{\text{g}m_l}\cup \rho_{k_l}$ and $\mathcal{L}_{\text{g}l}=\mathcal{L}_{\text{g}m_l}\cup k_l$ for each $l\in\{1,\cdots,L\}$, in which $m_l$ is the index of the previous potential policy based on which $(k_l, \rho_{k_l})$ is obtained.
When all the nodes have been included in each of the potential node sets, the policy with smallest distortion is considered as the final output, as shown in \textit{Algorithm} \ref{alg:g_greedy}.

If $L=O(K)$, the computational complexity of the group greedy algorithm would  be $O(K^3)$, which is reasonably small.
    In particular, the group policy degrades to the pure greedy policy when $L=1$.
On the other hand, since there are no more than ${K\choose k}2^k$ potential sub-policies when $k$ nodes has been included in each potential node set, the optimal policy would certainly be found if we set $L=\max_{k} {K\choose k}$.
    In this case, however, the computational complexity is very high.
In fact, the complexity is even higher than that of global searching, since the optimal policy for many subsets of $\mathcal{K}$ are also considered during the iterations.
    Nevertheless, our results show that the group greedy algorithm achieves satisfying performance when $L$ is reasonably small, as is shown in Figs. \ref{fig:d_ratio_vs_L} and \ref{fig:policy_error_ratio_vs_L}, Section \ref{sec:6_simulation}.

\begin{algorithm}[!t]
\algsetup{linenosize=\small}
\scriptsize
\caption{Group greedy algorithm}
\begin{algorithmic}[1]\label{alg:g_greedy}
\REQUIRE ~~\\
    \STATE Set $\mathcal{L}_{\text{g}l}=\emptyset$ and $\varrho_{\text{g}l}=\emptyset, \forall l\in\{1,\cdots,L\}$;
\ENSURE ~~\\
\FOR {$j=1$ to $K$}
    \STATE Find the best $L$ potential policies using \eqref{cond:local_opt_group}.
    \STATE $\mathcal{L}_{\text{g}l}=\mathcal{L}_{\text{g}m_l}\cup k_l, \forall l\in\{1,\cdots,L\}$;
    \STATE $\varrho_{\text{g}l}= \varrho_{\text{g}m_l}\cup \rho_{k_l}, \forall l\in\{1,\cdots,L\}$;
\ENDFOR
 \STATE \textbf{Output:} $\mathcal{A}_{\text{g}}^* = \mathcal{L}_{\text{g}1}, \varrho_{\text{g}}^*=\varrho_{\text{g}1} ,
                                             D_{\text{Hybrid}} (\varrho_{\text{g}1})$.
\end{algorithmic}

\end{algorithm}

\subsubsection{Sorted Greedy}
Unlike the group greedy algorithm which improves performance at a cost of larger searching space and higher complexity, we propose a sorted greedy algorithm in this subsection to achieve satisfying performance with much lower computational complexity.

The main idea is that many computations could be avoided if we can specify the order of adding nodes to the active set $\mathcal{A}_{\text{s}}$  in advance.
    To this end, we shall take the distortion achieved by each single node as a sorting criterion in the initialization phase.
Since \textit{Theorem} \ref{th:comparison} has shown that the coded scheme outperforms the uncoded scheme for $K=1$,
    we shall calculate the distortion $D_k$ of each individual node with equation \eqref{rt:distortion} and $\rho_k=1$ (i.e., the coded scheme).
After that, the distortion sequence $\{D_k, k\in\mathcal{K}\}$ would be sorted in descending order $\boldsymbol{\pi}=\{\pi_1,\cdot\cdot,\pi_K\}$.
    Since the coded scheme performs better in the low SNR regime, we shall take the node $\pi_1$ (with the largest distortion)  as the first active node of $\mathcal{A}_{\text{s}}$ and set $\rho_{\pi_1}=1$.
In the following operations, we shall add the node with the $k$-th largest distortion (i.e., $\pi_k$) to $\mathcal{A}_{\text{s}}$ in the $k$-th iteration and determine its coding scheme via distortion comparison (cf. step \ref{step:distortion_compare}).
    The outline of the sorted greedy algorithm is shown in \textit{Algorithm} \ref{alg:s_greedy}.
It is seen that the computational complexity of the sorted greedy algorithm is $O(K^2)$, which is lower than that of  pure greedy algorithm.

\begin{algorithm}[!t]
\algsetup{linenosize=\small}
\scriptsize
\caption{Sorted greedy algorithm}
\begin{algorithmic}[1]\label{alg:s_greedy}
\REQUIRE ~~\\
    \STATE Calculate distortion $D_k$ using \eqref{rt:homo_af} for all $k\in\mathcal{K}$;
    \STATE Sort $\{D_k, k\in\mathcal{K}\}$ in descending order and return order $\boldsymbol{\pi}$;
    \STATE Set $\mathcal{A}_{\text{s}}=\pi_1$ and $\varrho_{\text{s}}=1$;
\ENSURE ~~\\
\FOR {$k=2$ to $K$}
    \STATE $\mathcal{A}_{\text{s}} = \mathcal{A}_{\text{s}} \cup \pi_k$;
    \STATE $\rho_{\pi_k}^* = \min\limits_{\rho_{\pi_k} \in\{0,1\}} D_{\text{Hybrid}} (\varrho_{\text{s}}\cup  \rho_{\pi_k})$; \label{step:distortion_compare}

    \STATE $\varrho_{\text{s}} = \varrho_{\text{s}} \cup  \rho_{\pi_k}^*$; \label{step:policy_update_sorted}
\ENDFOR
 \STATE \textbf{Output:} $\mathcal{A}_{\text{s}}, \varrho_{\text{s}}, D_{\text{Hybrid}} (\varrho_{\text{s}})$.
\end{algorithmic}

\end{algorithm}

Our simulation results show that the  performance of the sorted greedy algorithm closely approaches that of global searching (cf. Figs. \ref{fig:d_ratio_vs_K} and \ref{fig:policy_error_ratio_vs_K}), especially when $K$ is large.

\begin{figure}[!t]
\centering
\includegraphics[width=3.7in]{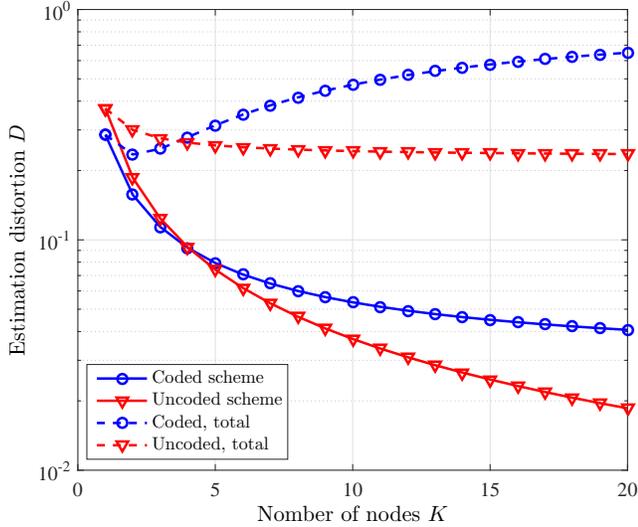}
\caption{Estimation distortion  versus number of nodes in homogeneous systems.} \label{fig:d_vs_K}
\end{figure}

\begin{figure}[htp]   

\hspace{-6 mm}
    \begin{tabular}{cc}
    \subfigure[$K=3$]
    {
    \begin{minipage}[t]{0.5\textwidth}
    \centering
    {\includegraphics[width = 3.7in] {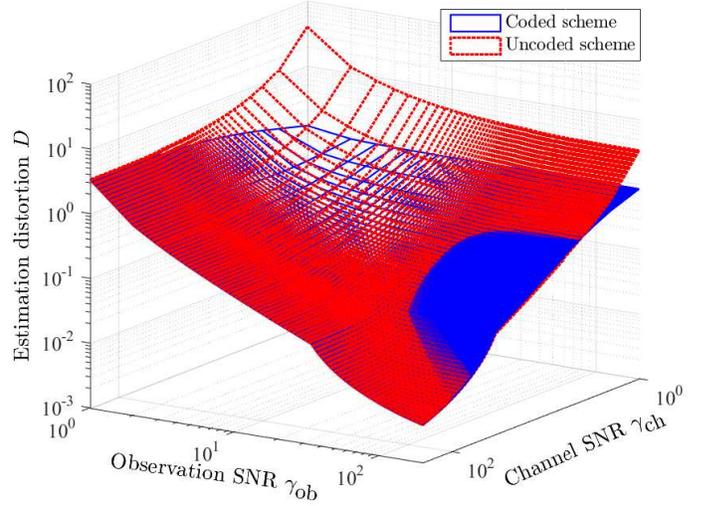} \label{fig:distortion_k3}}
    \end{minipage}
    }\\

    \subfigure[$K=30$]
    {
    \begin{minipage}[t]{0.5\textwidth}
    \centering
    {\includegraphics[width = 3.7in] {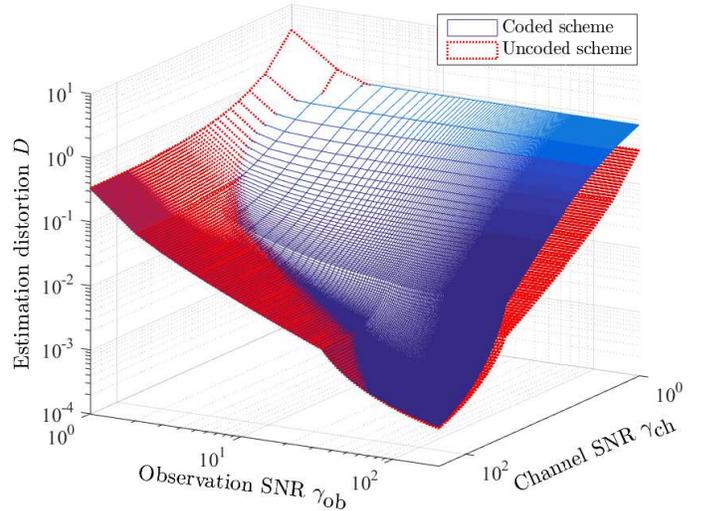} \label{fig:distortion_k30}}
    \end{minipage}
    }
    \end{tabular}
\caption{Estimation distortion versus channel SNR and observation SNR  in homogeneous systems. } \label{fig:distortion_snr}
\end{figure}

\section{Numerical Results}\label{sec:6_simulation}
In this section, we present the obtained results through numerical and Monte Carlo simulations.
    We set the source signal power to $\sigma^2_\theta=1$ and express SNRs in the non-decibel format.

\subsection{Distortion of Homogeneous Sensing Systems}
In Fig. \ref{fig:d_vs_K}, we present  the scaling law of the estimation distortion of homogeneous sensing systems (cf. Section \ref{sec:4_homo}), in which the coded scheme or the uncoded scheme is used, with the individual or the total power constraint.
    We set the observation SNR to $\gamma_{\text{ob}}=7$ and the channel SNR to $\gamma_{\text{ch}}=5$.
For systems with the individual power constraint, we  observe that the estimation distortion is decreasing with $K$ both when the coded scheme is used (the solid curve with $\circ$, cf. \eqref{rt:homo_digital}) and when the uncoded scheme is used (the solid curve with $\triangledown$, cf. \eqref{rt:homo_af}).
    This is because when $K$ is increased, the system would have more energy and diversity to perform the estimation.
In particular, it is seen that as $K$ goes to infinity, the distortion  goes to zero (cf. \eqref{rt:homo_af_K_inf}) in uncoded systems and converges to some none-zero constant (cf. \eqref{rt:homo_digital_K_inf}) in coded systems.
    The reason is that in coded systems, the marginal gain of using more nodes is significantly constrained by the correlation among quantization noises.
We also observe that the coded scheme outperforms the uncoded scheme when $K$ is small and underperforms the uncoded scheme when $K$ is large, as shown in  \textit{Remark} \ref{rmk:d_k_small}.
    More specifically, the uncoded scheme performs better if $K>4.0857$ (cf. \eqref{rt:rmk_d_k_small}) for the setup considered here.

When the nodes are constrained by their total power, we set $\gamma_{\text{total}}=5$ and $\gamma_{\text{ob}}=7$.
 As shown by the dashed curves, the corresponding estimation distortions are larger than those under the individual power constraint.
    Particularly, the distortion of  a coded system is the smallest when $K$ is small (e.g., $K=2$) and converges to $\sigma^2_\theta$ as $K$ goes to infinity (cf. \eqref{rt:homo_digital_total}, \eqref{rt:homo_digit_total_inf}).
On the other hand, the distortion of an uncoded system monotonically decreases to some constant (cf. \eqref{rt:homo_af_total_inf}) as $K$ goes to infinity.
    In a nutshell, the coded scheme would most probably perform better if $K$ is small; the uncoded scheme is suggested if plenty of nodes are available (see also in \textit{Proposition} \ref{prop:comparison_total_p} and \textit{Remark} \ref{rmk:d_k_small}).
For the setup considered here, the uncoded scheme performs better if $K>3.65483$ (cf. \eqref{rt:compare_condt_total_2}).

Fig. \ref{fig:distortion_snr} depicts how the channel SNR and the observation SNR affect the estimation distortion of homogeneous sensing systems with the individual power constraint.
    It is seen that the distortion becomes smaller when either the channel SNR or the observation SNR is increased.
Moreover, the uncoded scheme  outperforms (has smaller distortion) the coded scheme only if the observation SNR is sufficiently large and the channel SNR is neither too small nor too large, as shown in \textit{Theorem} \ref{th:comparison}.
    It is also observed that the uncoded scheme outperforms the coded scheme in more cases  when $K$ is large (e.g., $K=30$).
As $K$ gets larger, however, the difference between their performances becomes smaller and smaller.

\begin{figure}[!t]
\centering
\includegraphics[width=3.7in]{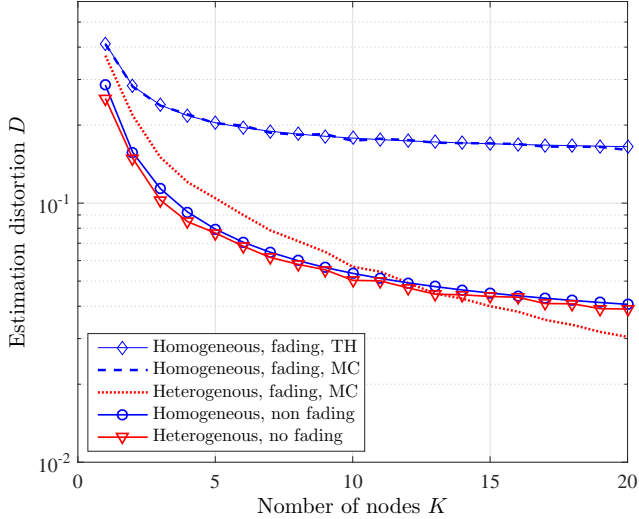}
\caption{Estimation distortion of coded sensing systems with and without fading.} \label{fig:D_fading}
\end{figure}

\subsection{Distortion of Fading Sensing Systems} \label{subcec:fading_sim}
In this subsection, we investigate the estimation distortion of distributed sensing systems with individual power constraints and block Rayleigh fading.
    To be specific, the fading power gain $h$ keeps unchanged during the transmission of each channel codeword and varies randomly among different transmissions.
Also,  the fading power gain of each node is independent from those of other nodes.
    For the given average fading power gain $\nu>0$, the probability density function of $h$ is given by $f_{\text{h}}(x)=\frac1\nu \exp(\frac{-x}{\nu})$.

 Note that the instantaneous channel SNR can be expressed as $\gamma_{\text{ch}k}^{\text{fading}}=h_k\gamma_{\text{ch}k}$, in which $\gamma_{\text{ch}k}=\frac{P_k}{\sigma^2_{\text{ch}k}}$ is constant over time and is different among nodes.
    In each round of the Monte Carlo simulation, we generate a sequence of channel SNR $\{\gamma_{\text{ch}k}, k\in \mathcal{K}\}$ randomly from a folded normal distribution with standard  deviation $\sigma_1$.
Likewise, we  generate the observation SNRs $\{\gamma_{\text{ob}k}, k\in \mathcal{K}\}$ according to a folded normal distribution with standard  deviation $\sigma_2$.
    In particular, the location parameters $\mu$ of the two distributions are chosen such that the arithmetic means of $\gamma_{\text{ch}k}$ and $\gamma_{\text{ob}k}$ are equal to $\gamma_{\text{ch}}$ and $\gamma_{\text{ob}}$ respectively, i.e., the SNR parameters for the non-fading homogeneous sensing system.

In the homogenous case, the instantaneous channel SNR of the system is given by $\gamma_{\text{ch}}^{\text{fading}} = h\gamma_{\text{ch}}$, in which $h$ will change randomly for each period of transmission.
    Thus, the estimation distortion should be the statistical average of the instantaneous distortions given by \eqref{rt:homo_digital} and \eqref{rt:homo_af}.
In particular, the estimation distortion of a coded fading homogeneous system is given by
    \begin{align} \nonumber
        D_{\text{Coded}}^{\text{fading}} &= \frac{\sigma^2_\theta}{K}
             \left( \frac{1}  {\gamma_{\text{ob}} } +
             \frac{(\gamma_{\text{ob}}-1) e^{\frac{1}{\nu \gamma_{\text{ch}}}} }
                        {\nu\gamma_{\text{ob}} \gamma_{\text{ch}} }
             \text{E}_1\Big(\frac{1}{\nu \gamma_{\text{ch}}}\Big)
             \right. \\
        \label{rt:homo_digital_bf}
        &\hspace{1.95cm} \left.  + \frac{(K-1) e^{\frac{1}{\nu \gamma_{\text{ch}}}} } {\nu \gamma_{\text{ch}}}   \text{E}_2\Big(\frac{1}{\nu \gamma_{\text{ch}}}\Big)
             \right),
    \end{align}
   in which $\text{E}_n(x)=\int_1^\infty \frac{1}{t^n} e^{-xt} \text{d}t$ is the exponential integral of order $n$.
        When the uncoded scheme is used, however, the corresponding distortion would be $D_{\text{Uncoded}}^{\text{fading}} = \infty$.
   This is because the probability that $h$ is close to zero is strictly positive and the corresponding estimation distortion (cf. \eqref{rt:homo_af}) approaches infinity.

    We set $\nu=0.9$, $\gamma_{\text{ch}}=5$, $\gamma_{\text{ob}}=7$, and $\sigma_1=\sigma_2=1.5$.
First, the estimation distortion of the fading homogeneous sensing system is shown theocratically (TH)  by the $\diamond$-labeled curve (cf. \eqref{rt:homo_digital_bf}), which coincides with the corresponding Monte Carlo (MC) result (the dash-dotted curve) exactly.
    For the fading heterogeneous sensing system, the Monte Carlo result is presented by the dotted curve and no explicit theoretical result is available.
 It is observed that in the fading heterogeneous sensing system, the distortion is much smaller than that of the fading homogeneous sensing system.
    The main reason is that for the fading heterogeneous sensing system, it is more likely to have some capable (i.e., with a large channel SNR and a large observation SNR) nodes, which can reduce the estimation distortion significantly.
 Second, it is noted from Fig. \ref{fig:distortion_snr} and \textit{Proposition} \ref{prop:homo_digital_total} that although $D_{\text{Coded}}$ is not a convex function of $\gamma_{\text{ch}}$, it does not deviate from a convex function very much.
    The estimation distortion of the fading homogenous sensing system, therefore, would be reduced if the instantaneous channel SNRs are almost the same in each period of transmission, i.e., if the randomness in the fading gain is reduced.
 Thus, the estimation distortion of the non-fading homogeneous sensing system (with channel SNR $\gamma_{\text{ch}}$ and observation SNR $\gamma_{\text{ob}}$),  would be much smaller, as shown by the $\circ$-labeled curve (cf. \eqref{rt:distortion}).
    Third, the estimation distortion of the non-fading heterogeneous sensing system (in which $\{\gamma_{\text{ch}k}\}$ and $\{\gamma_{\text{ob}k}\}$ are used) is shown by the curve labeled with $\triangledown$ (cf. \eqref{rt:homo_digital}).
 As is shown, the estimation distortion is slightly smaller than that of the non-fading homogeneous sensing system, which is due to the randomness in the SNRs of the links in the heterogeneous system.
    It also noted that the non-fading heterogeneous sensing system outperforms the fading heterogeneous sensing system only when $K$ is small.
This is because when $K$ is large, the probability of having a capable node is also larger.
    \footnote{The probability for the fading heterogeneous system to have a capable node can be expressed as $p_{\text{sys}}=K p_{\text{c}}$, in which $p_{\text{c}}$ is the probability for the observation SNR and the channel SNR of a node to be larger than some given thresholds, i.e., to be capable.
 It is clear that $p_{\text{sys}}$ is increasing with $K$.}

\begin{figure}[!t]
\centering
\includegraphics[width=3.7in]{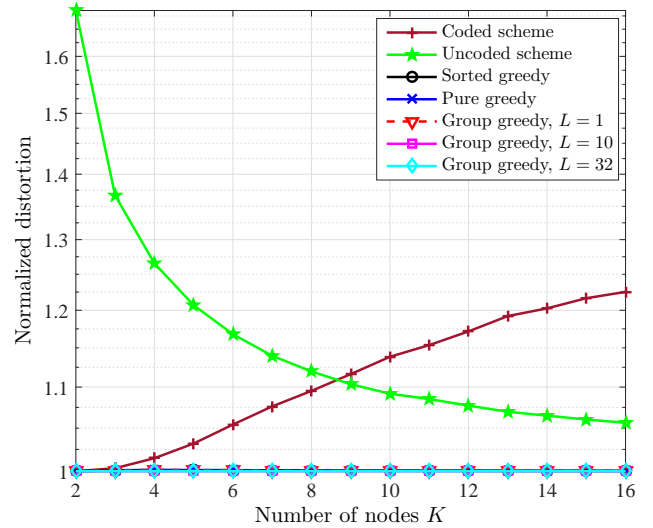}
\caption{Normalized distortion of the coded scheme, the uncoded scheme, and the proposed hybrid schemes.} \label{fig:D_hybrid}
\end{figure}

\begin{figure*}[htp]   

\hspace{-6 mm}
    \begin{tabular}{cc}
    \subfigure[Normalized distortion versus number of nodes]
    {
    \begin{minipage}[t]{0.5\textwidth}
    \centering
    {\includegraphics[width = 3.7in] {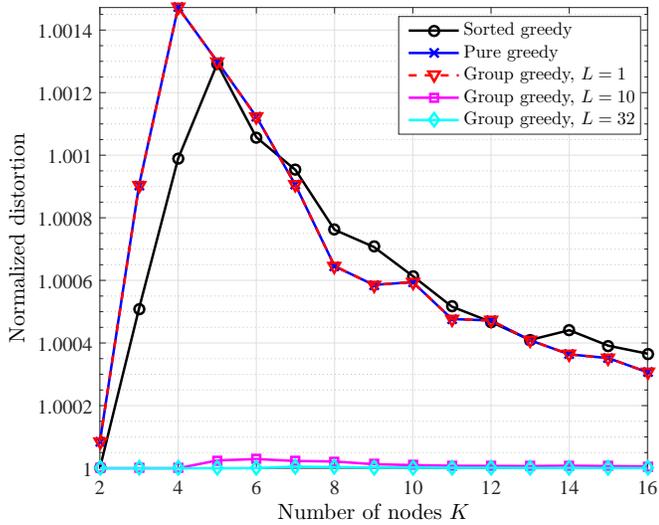} \label{fig:d_ratio_vs_K}}
    \end{minipage}
    }

    \subfigure[Policy error rate versus number of nodes]
    {
    \begin{minipage}[t]{0.5\textwidth}
    \centering
    {\includegraphics[width = 3.7in] {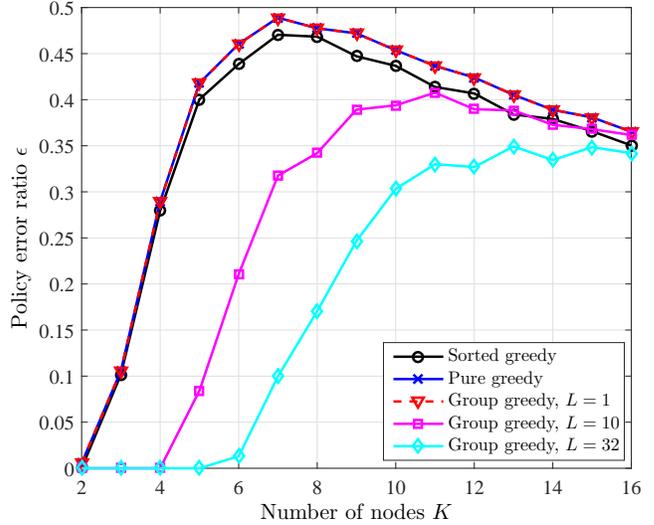} \label{fig:policy_error_ratio_vs_K}}
    \end{minipage}
    }\\

    \subfigure[Normalized distortion versus group size ($K=10$)]
    {
    \begin{minipage}[t]{0.5\textwidth}
    \centering
    {\includegraphics[width = 3.7in] {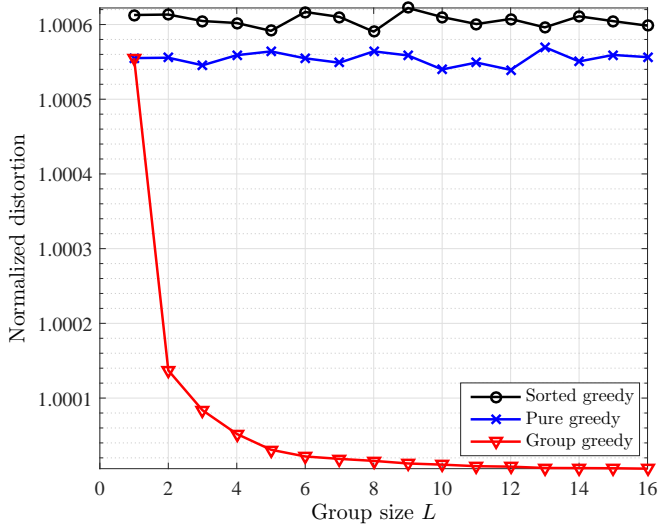} \label{fig:d_ratio_vs_L}}
    \end{minipage}
    }

    \subfigure[Policy error rate versus group size ($K=10$)]
    {
    \begin{minipage}[t]{0.5\textwidth}
    \centering
    {\includegraphics[width = 3.7in] {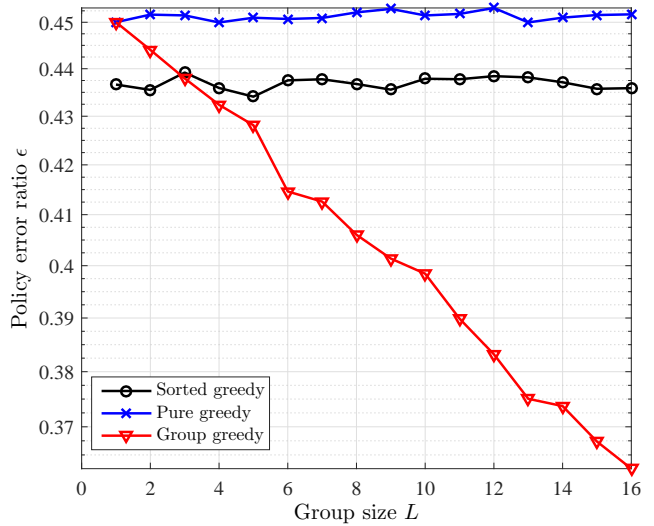} \label{fig:policy_error_ratio_vs_L}}
    \end{minipage}
    }

    \end{tabular}
\caption{Normalized distortion and policy error rate of greedy algorithms. } \label{fig:distortion}
\end{figure*}

\subsection{Performance of Greedy Algorithms}
In this part, we evaluate the performance of the proposed greedy algorithms (cf. Subsection \ref{subsec:greedy_algo}).
    First, we propose the following two metrics  to evaluate the performance of the algorithms under test.

\begin{definition}
     \textit{Normalized distortion} $ \widetilde{D}$ is the ratio between the average distortion $\widebar{D}$ of a greedy algorithm and the average distortion $\widebar{D}_{\text{opt}}$ of global searching. That is,
    \begin{equation} \label{df:normal_D}
        \widetilde{D} = \frac{\widebar{D}}{\widebar{D}_{\text{opt}}}.
    \end{equation}
\end{definition}

\begin{definition}
     \textit{Policy error rate} $\epsilon$ is defined as the probability that an element of the policy $\varrho$ obtained by a greedy algorithm is different from that of the optimal policy $\varrho_{\text{opt}}$ obtained by global searching.
        Thus, $\epsilon$ can be empirically calculated by
    \begin{equation} \label{df:error_rate}
        \epsilon = \frac{N_{\text{error}}}{N_{\text{total}}},
    \end{equation}
    in which $N_{\text{total}}=N_{\text{sim}}K$ and $N_{\text{error}}$ is the total number of errors in policy $\varrho_n$ compared with policy $\varrho_{\text{opt}n}$ in $N_{\text{sim}}$ rounds of simulations.
Specifically, we have  $N_{\text{error}}=\sum_{n=1}^{N_\text{sim}} \sum_{k=1}^K\rho_{nk}\oplus \rho_{\text{opt}nk}$, in which $\oplus$ is the bitwise modulo-two sum operator.
\end{definition}

Figs. \ref{fig:D_hybrid}, \ref{fig:d_ratio_vs_K}, and \ref{fig:policy_error_ratio_vs_K} present how the normalized distortion and the policy error rate change with the number of nodes.
    We run each algorithm for $N_{\text{sim}}=10000$ rounds and calculate $\widetilde{D}$ and $\epsilon$, respectively, according to \eqref{df:error_rate} and \eqref{df:normal_D}.

In Fig. \ref{fig:D_hybrid}, it is seen that the distortion of a heterogeneous sensing system is much larger when a single coding scheme is used (i.e., the coded scheme or the uncoded scheme) than that when the hybrid coding is used.
    It is also observed that for each given number $K$ of nodes, the distortions achieved by the proposed schemes (e.g., the sorted greedy, the pure greedy, and the group greedy algorithms) closely approach that of global searching, i.e., we have $\widetilde{D} =1$ for each of them.

Compared with the pure greedy algorithm (the curve with $\times$), we observe in Figs. \ref{fig:d_ratio_vs_K} and \ref{fig:policy_error_ratio_vs_K} that the sorted greedy algorithm (the curve with $\circ$) performs better in terms of normalized distortion when $K$ is small and performs slightly worse when $K$ gets larger;
     the policy error rate of the sorted greedy algorithm is always smaller than that of the pure greedy algorithm.
Since the computational complexity of the sorted greedy algorithm is much lower, however,
    the sorted greedy algorithm should be a simple yet powerful algorithm.
For the group greedy algorithm (labeled by $\triangledown$, {\small$\square$}, or $\diamond$), we see that its normalized distortion and policy error rate coincide with that of the pure greedy algorithm if $L=1$.
    When group size $L$ is large (e.g., $L=10$ and $L=32$), the normalized distortion decreases to unity and the policy error rate is also reduced, as shown by curves labeled by {\small$\square$} and $\diamond$.

The effectiveness of increasing group size $L$ is illustrated in Figs. \ref{fig:d_ratio_vs_L} and \ref{fig:policy_error_ratio_vs_L}, in which the numbers of simulation rounds and the number of nodes are set to $N_{\text{sim}}=5000$ and $K=10$, respectively.
    It is seen that when $L$ is increased, both the normalized distortion and the policy error rate decrease quickly.
Moreover, we would like to mention that the estimation distortions of all of the three proposed greedy algorithms approach that of global searching very well (with errors less than $0.17\%$ for all cases), as shown in Figs. \ref{fig:d_ratio_vs_K}  and \ref{fig:d_ratio_vs_L}.

\begin{figure}[!t]
\centering
\includegraphics[width=3.7in]{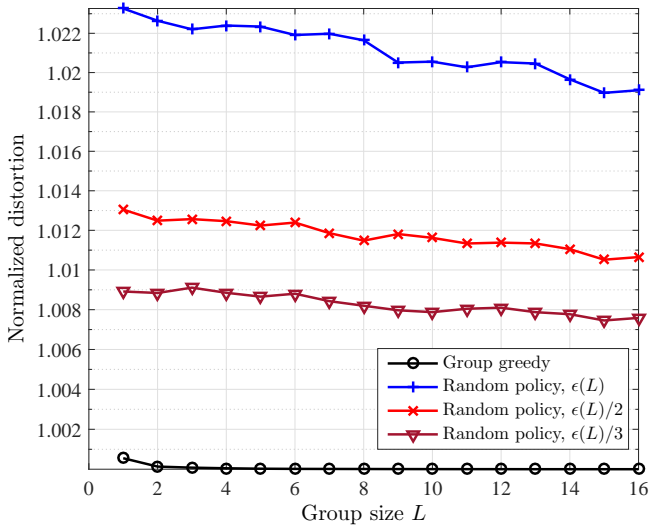}
\caption{Normalized distortion of the group greedy algorithm and policies with random errors, in which $K=10$ and $\epsilon(L)$ is the error rate of the group greedy algorithm with group size $L$.} \label{fig:random_error_policy}
\end{figure}

It is also noted in Fig. \ref{fig:policy_error_ratio_vs_L} that the policy error rate is quite large, i.e., the coding policies obtained by the proposed algorithms are different from the policy obtained by global searching.
    Meanwhile, Fig. \ref{fig:d_ratio_vs_L} shows that the distortions of the proposed algorithms closely approach that of global searching.
    However, this does not mean that optimizing the coding policy for the system is not necessary.
To show this, we investigate the normalized distortions of the group greedy algorithm and three random policies in Fig. \ref{fig:random_error_policy}.
     Specifically, for each $L\geq1$, we denote the policy error rate of the group greedy algorithm as $\epsilon(L)$ and randomly generate three coding policies, in which the coding scheme of each node is different from that of global searching with probability $\epsilon(L), \epsilon(L)/2$, and $\epsilon(L)/3$, respectively.
With the same or even smaller policy error rate, it is observed that policies with random errors yield much larger estimation
distortions, which validates the effectiveness of the proposed greedy algorithms.
    The reason is that in the proposed algorithms, the coding policies are obtained according to a certain optimizing rule and errors only occur to the nodes whose coding scheme do not affect the system distortion much.
More specifically, \textit{Theorem} \ref{th:distortion_hybrid} shows that in the distortion of the system, the contribution from either the coded node set or the uncoded node set is dominated by their respective capable nodes, i.e., the nodes with the strongest channel-observation SNR pairs.
    On the contrary, for those nodes whose channel and/or observation SNR(s) is/are small (e.g., approaches zero), their contributions to the sums in \eqref{rt:distortion_hetero} are negligible.
For these nodes, therefore, whether the obtained coding schemes are consistent with the optimal policy does not change the estimation distortion much.
If a coding policy is randomly chosen, however, the corresponding distortion would be large, especially when the coding schemes of those capable nodes are improperly set.

\section{Conclusion}\label{sec:5_conclusion}
In this paper, we investigated the estimation distortion of distributed sensing systems with  separated source-channel coding or/and  joint source-channel coding, which are implemented by separate lossy-source/channel coding and uncoded-forwarding, respectively.
    We show that the estimation distortions of the two coding schemes are closely related to the number $K$ of nodes, the observation SNRs, and the channel SNRs of the system.
Specifically, when $K$ is small or the observation SNRs are small, the coded scheme ensures smaller estimation distortion; when $K$ is large and the observation SNRs are also large, the uncoded scheme yields better estimations.
    On one hand,  the coded scheme can regulate the observation noise and the channel noise naturally, and thus reduces the estimation distortion effectively.
On the other hand, the unavoidable correlation among quantization noises prohibits the coded scheme to perform very well when $K$ is large.
    Therefore, it is reasonable to use the coded scheme and the uncoded scheme flexibly in distributed sensing systems, i.e., using the hybrid coding scheme.
In this regard, the proposed three algorithms have provided promising solutions for the optimal design of distributed sensing systems.
    Furthermore, we note that in many large scale sensing systems, it is more practical to transmit the observations with random access, which inevitably decreases the timeliness of sensing.
In our future work, therefore, we shall study the estimation distortion of distributed sensing systems with random access and practical timeliness constraints.

\appendix

\renewcommand{\theequation}{\thesection.\arabic{equation}}
\newcounter{mytempthcnt}
\setcounter{mytempthcnt}{\value{theorem}}
\setcounter{theorem}{2}
\addcontentsline{toc}{section}{Appendices}\markboth{APPENDICES}{}

\subsection{Proof of \textit{Proposition} \ref{prop:sigma_n}}  \label{prf:prop_sigma_n}
Before proving \textit{Proposition} \ref{prop:sigma_n}, we shall provide a useful lemma first.
\begin{lemma} \label{lem:corre_nqu}
    For the quantization noise $\boldsymbol{n}_{\text{qu}k}$ (cf. \eqref{df:n_quk}), its covariances with the source signal and the observation noise are, respectively, given by,
    \begin{align}
        \mathbb{E}[n_{\text{qu}k}n_{\text{ob}k}] &= \frac{ \sigma^2_{\text{ob}k} \sigma^2_{\text{qu}k} } {\sigma^2_{\text{ob}k} + \sigma^2_\theta}, \\
        \mathbb{E}[n_{\text{qu}k}\theta] &= \frac{ \sigma^2_\theta \sigma^2_{\text{qu}k} } {\sigma^2_{\text{ob}k} + \sigma^2_\theta}.
    \end{align}
\end{lemma}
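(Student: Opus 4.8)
The plan is to exploit the test-channel representation in \eqref{eq:x_first}, namely $\widetilde{\theta}_k = x_k + n_{\text{qu}k}$, together with the structural facts already established: $\widetilde{\theta}_k = \theta + n_{\text{ob}k}$, the noise $n_{\text{qu}k}$ is zero-mean Gaussian with variance $\sigma^2_{\text{qu}k}$, and $n_{\text{qu}k}$ is independent of the reconstruction $x_k$. The key observation is that in the rate-distortion-optimal Gaussian test channel, the reconstruction $x_k$ is itself obtained from $\widetilde{\theta}_k$ via the backward (MMSE) channel, so $x_k$ and $n_{\text{qu}k}$ being independent forces a specific correlation between $n_{\text{qu}k}$ and $\widetilde{\theta}_k$. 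Concretely, I would compute $\mathbb{E}[n_{\text{qu}k}\widetilde{\theta}_k]$ first: writing $\widetilde{\theta}_k = x_k + n_{\text{qu}k}$ and using $\mathbb{E}[x_k n_{\text{qu}k}]=0$ gives $\mathbb{E}[n_{\text{qu}k}\widetilde{\theta}_k] = \mathbb{E}[n_{\text{qu}k}^2] = \sigma^2_{\text{qu}k}$.

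Next I would disaggregate $\widetilde{\theta}_k = \theta + n_{\text{ob}k}$ to relate $\mathbb{E}[n_{\text{qu}k}\theta]$ and $\mathbb{E}[n_{\text{qu}k}n_{\text{ob}k}]$, whose sum is $\sigma^2_{\text{qu}k}$ by the previous step. To pin down each term individually, I would use the fact that the optimal $x_k$ is a scaled version of $\widetilde{\theta}_k$ plus independent noise (the standard Gaussian rate-distortion forward test channel), so that $x_k = \beta_k \widetilde{\theta}_k + z_k$ with $z_k \perp \widetilde{\theta}_k$ and $\beta_k = 1 - \sigma^2_{\text{qu}k}/(\sigma^2_\theta + \sigma^2_{\text{ob}k})$ (since $\mathrm{Var}(\widetilde{\theta}_k) = \sigma^2_\theta + \sigma^2_{\text{ob}k}$). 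Then $n_{\text{qu}k} = \widetilde{\theta}_k - x_k = (1-\beta_k)\widetilde{\theta}_k - z_k$, and since $z_k$ is independent of $\theta$ and $n_{\text{ob}k}$ individually, I get $\mathbb{E}[n_{\text{qu}k}\theta] = (1-\beta_k)\mathbb{E}[\widetilde{\theta}_k \theta] = (1-\beta_k)\sigma^2_\theta$ and $\mathbb{E}[n_{\text{qu}k}n_{\text{ob}k}] = (1-\beta_k)\sigma^2_{\text{ob}k}$. Substituting $1-\beta_k = \sigma^2_{\text{qu}k}/(\sigma^2_\theta + \sigma^2_{\text{ob}k})$ yields exactly the two claimed expressions.

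The main obstacle is justifying the step $z_k \perp \theta$ and $z_k \perp n_{\text{ob}k}$ separately, rather than merely $z_k \perp \widetilde{\theta}_k$; this requires the extra observation that in the test channel $x_k \leftrightarrow \widetilde{\theta}_k$, the auxiliary noise $z_k$ is generated independently of everything upstream of $\widetilde{\theta}_k$, i.e. the Markov chain $z_k \to \widetilde{\theta}_k \to (\theta, n_{\text{ob}k})$ degenerates because $z_k$ is drawn independently of the source. Equivalently, one can argue directly from $n_{\text{qu}k} \to \theta \to n_{\text{ob}j}$-type Markov reasoning as in \eqref{rt:cond_independ}: $n_{\text{qu}k}$ depends on $(\theta, n_{\text{ob}k})$ only through $\widetilde{\theta}_k$, and conditioned on $\widetilde{\theta}_k$ its conditional mean is proportional to $\widetilde{\theta}_k$ with the slope $1-\beta_k$, so $\mathbb{E}[n_{\text{qu}k} \mid \theta, n_{\text{ob}k}] = (1-\beta_k)(\theta + n_{\text{ob}k})$, and multiplying by $\theta$ or $n_{\text{ob}k}$ and taking expectations gives the result. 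Once that conditional-mean identity is in hand, the rest is a one-line computation.
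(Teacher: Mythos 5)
Your proof is correct, but it takes a genuinely different route from the paper. The paper proves the lemma with the precision-matrix machinery it uses throughout: it forms the augmented Gaussian vector $\boldsymbol{\xi}=[n_{\text{ob}k},\,n_{\text{qu}k},\,\widetilde{\theta}_k]^{\text{T}}$, invokes \textit{Lemma}~\ref{lem:q_cond_0} together with the conditional independence of $n_{\text{ob}k}$ and $n_{\text{qu}k}$ given $\widetilde{\theta}_k$ to place a zero in the precision matrix, and then extracts the unknown covariance $c=\mathbb{E}[n_{\text{qu}k}n_{\text{ob}k}]$ by solving the linear equations coming from $\boldsymbol{\Sigma}_{\xi}\textbf{Q}_{\xi}=\textbf{I}$ (and repeats the argument with $[\theta,\,n_{\text{qu}k},\,\widetilde{\theta}_k]$ for $\mathbb{E}[n_{\text{qu}k}\theta]$). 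You instead work directly with the Gaussian regression structure of the test channel: from $\widetilde{\theta}_k=x_k+n_{\text{qu}k}$ with $x_k\perp n_{\text{qu}k}$ you get $\mathbb{E}[x_k\mid\widetilde{\theta}_k]=\beta_k\widetilde{\theta}_k$ with $1-\beta_k=\sigma^2_{\text{qu}k}/(\sigma^2_\theta+\sigma^2_{\text{ob}k})$, hence $\mathbb{E}[n_{\text{qu}k}\mid\theta,n_{\text{ob}k}]=(1-\beta_k)(\theta+n_{\text{ob}k})$ (note that $\widetilde{\theta}_k$ is a deterministic function of $(\theta,n_{\text{ob}k})$, which makes this step immediate once conditional independence given $\widetilde{\theta}_k$ is granted), and the two covariances follow by multiplying by $\theta$ or $n_{\text{ob}k}$ and using $\theta\perp n_{\text{ob}k}$. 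The one ingredient you flag as needing justification---that $x_k$ (equivalently your $z_k$) is conditionally independent of $(\theta,n_{\text{ob}k})$ given $\widetilde{\theta}_k$---is exactly the Markov property the paper also asserts at the start of its proof (``For any given $\widetilde{\theta}_k$, $\theta$ and $n_{\text{ob}k}$ would be conditionally independent of $x_k$ and $n_{\text{qu}k}$''), so both proofs rest on the same structural facts. Your argument is more elementary and transparent (no matrix inversion, no appeal to the Gaussian-Markov-random-field lemma), and it makes the proportionality constant $1-\beta_k$ conceptually visible; the paper's version has the advantage of reusing the same precision-matrix technique that drives the proof of \textit{Proposition}~\ref{prop:sigma_n}, where the $K$-node cross terms are computed.
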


\begin{figure}[!t]
\centering
\includegraphics[width=1.6in]{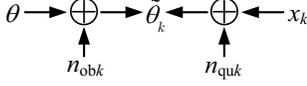}
\caption{Elements of noisy observation $\widetilde{\theta}_k$.} \label{fig:apx_model}
\end{figure}

\begin{proof}
    Note that the noisy version source signal of node $k$ is given by $\widetilde{{\theta}}_k=\theta+{n}_{\text{ob}k}$.
        Note also that the signal recovered by the fusion center can be determined by the test channel $\widetilde{\theta}_k = x_k + n_{\text{qu}k}$ (cf. \eqref{eq:x_first}), as shown in Fig. \ref{fig:apx_model}.
For any given $\widetilde{\theta}_k$, therefore, $\theta$ and ${n}_{\text{ob}k}$ would be conditionally independent of $x_k$ and ${n}_{\text{qu}k}$.

Consider a random vector given by $\boldsymbol{\xi}=[{n}_{\text{ob}k}, {n}_{\text{qu}k}, \widetilde{\theta}_k]^{\text{T}}$, for which the  covariance matrix and the precision matrix  are, respectively, given by
    \begin{align}\label{apx_E_2}
    \boldsymbol{\Sigma}_{\xi} &=
    \left[
      \begin{array}{ccc}   
        \sigma^2_{\text{ob}k} & c & \sigma^2_{\text{ob}k} \\  
        c  & \sigma^2_{\text{qu}k} & \sigma^2_{\text{qu}k} \\  
        \sigma^2_{\text{ob}k} & \sigma^2_{\text{qu}k} & \sigma^2_{\text{ob}k}+\sigma^2_\theta
      \end{array}
    \right], \\
    \textbf{Q}_{\xi} &=
    \left[
      \begin{array}{ccc}   
        q_{11} & 0 & q_1 \\  
        0  & q_{22} & q_2 \\  
        q_1 & q_2 & q_{33}
      \end{array}
    \right],
\end{align}
in which $c=\mathbb{E}[n_{\text{qu}k}n_{\text{ob}k}] $ and $q_{ij}=(\textbf{Q}_{\xi})_{ij}$.
     By \textit{Lemma} \ref{lem:q_cond_0} and the fact that $n_{\text{ob}k}$ and $n_{\text{qu}k}$ are conditionally independent for any given $\widetilde{\theta}_k$, we have $q_{12}=q_{21}=0$.

Since the precision matrix is the inverse of the covariance matrix, we know that $\textbf{M}_{\xi}=\boldsymbol{\Sigma}_{\xi} \textbf{Q}_{\xi}$ would be a unit matrix.
    By using $(\textbf{M}_{\xi})_{11}=1$, $(\textbf{M}_{\xi})_{31}=0$, and $(\textbf{M}_{\xi})_{21}=0$, we have the following equations,
    \begin{align} \label{apx:lem_1}
        \sigma^2_{\text{ob}k} q_{11} + \sigma^2_{\text{ob}k} q_1 &=1, \\ \label{apx:lem_2}
        \sigma^2_{\text{ob}k} q_{11} + (\sigma^2_{\text{ob}k}+\sigma^2_\theta) q_1 &=0, \\ \label{apx:lem_3}
        c q_{11} + \sigma^2_{\text{qu}k} q_1 &=0,
    \end{align}
    from which we have $\mathbb{E}[n_{\text{qu}k}n_{\text{ob}k}] =c=  \sigma^2_{\text{ob}k} \sigma^2_{\text{qu}k} / (\sigma^2_{\text{ob}k} + \sigma^2_\theta)$.

    Likewise, by considering the covariance matrix and the precision matrix of random vector $\boldsymbol{\eta}=[\theta, {n}_{\text{qu}k}, \widetilde{\theta}_k]^{\text{T}}$, we have
    $\mathbb{E}[n_{\text{qu}k}\theta] = \sigma^2_\theta \sigma^2_{\text{qu}k} / (\sigma^2_{\text{ob}k} + \sigma^2_\theta)$. This completes the proof of \textit{Lemma} \ref{lem:corre_nqu}.
\end{proof}

With \textit{Lemma} \ref{lem:corre_nqu}, \textit{Proposition} \ref{prop:sigma_n} can be proved as follows.
\begin{proof}
    For notational simplicity, we rewrite the elements of $\boldsymbol{\Sigma}_{\widetilde{\boldsymbol{n}}}$ as $(\boldsymbol{\Sigma}_{\widetilde{\boldsymbol{n}}})_{kj}=c_{kj}$ for all $k,j\in\mathcal{K}$,
         denote $(\boldsymbol{\Sigma}_{\widetilde{\boldsymbol{n}}})_{k,K+1} = (\boldsymbol{\Sigma}_{\widetilde{\boldsymbol{n}}})_{K+1,k}$ as $c_k$ for  $1\leq k\leq K+1$.
    Also, we denote $(\textbf{Q}_{\widetilde{\boldsymbol{n}}})_{kj}$ as $q_{kj}$ for all $k\in\mathcal{K}$ and denote
    $(\textbf{Q}_{\widetilde{\boldsymbol{n}}})_{K+1,k} = (\textbf{Q}_{\widetilde{\boldsymbol{n}}})_{k, K+1}$ as $q_k$ for   $1\leq k\leq K+1$.

    It is clear that $c_{K+1}=\mathbb{E}[\theta^2]=\sigma^2_\theta$ is the source signal power and $c_{kk}=\mathbb{E}[n_k^2]$ is the noise power at node $k$.
        For each $k\in\mathcal{K}$, we then have
    \begin{align}
        c_{kk} & = \mathbb{E}[(n_{\text{qu}k}-n_{\text{ob}k})^2] \\
       \label{apx:a_c_ii}
       &=\sigma^2_{\text{ob}k} + \sigma^2_{\text{qu}k} -
                    \frac{2\sigma^2_{\text{ob}k} \sigma^2_{\text{qu}k}}
                    {\sigma^2_\theta+\sigma^2_{\text{ob}k}},
    \end{align}
    in which \eqref{apx:a_c_ii} follows \textit{Lemma} \ref{lem:corre_nqu}.

    Likewise, the correlation between $n_k$ and $\theta$ is given by
 \begin{equation}\label{apx:a_ci}
          c_{k}  = \mathbb{E}[(n_{\text{qu}k}-n_{\text{ob}k})\theta] = \mathbb{E}[n_{\text{qu}k} \theta]
           = \frac{\sigma^2_\theta \sigma^2_{\text{qu}k}}
                    {\sigma^2_\theta+\sigma^2_{\text{ob}k}}.
        \end{equation}

    For any non-zero vector $\boldsymbol{v}$,  we have $\boldsymbol{v}^{\text{T}} \boldsymbol{\Sigma}_{\widetilde{\boldsymbol{n}}} \boldsymbol{v}
    =\boldsymbol{v}^{\text{T}} \mathbb{E}[\widetilde{\boldsymbol{n}} \widetilde{\boldsymbol{n}}^{\text{T}}] \boldsymbol{v}
    = \mathbb{E}[(\boldsymbol{v}^{\text{T}}\widetilde{\boldsymbol{n}}) (\boldsymbol{v}^{\text{T}}\widetilde{\boldsymbol{n}})^{\text{T}}]
    = \mathbb{E}[|\boldsymbol{v}^{\text{T}}\widetilde{\boldsymbol{n}}|^2] \geq0$,
which means that $\boldsymbol{\Sigma}_{\widetilde{\boldsymbol{n}}}$ is non-negative definitive.
    Moreover, the independence among observation noises $n_{\text{ob}k}$ and source signal $\theta$ implies that neither  element of vector ${\widetilde{\boldsymbol{n}}}$ can be expressed as a linear combination of other elements.
Thus, $\boldsymbol{\Sigma}_{\widetilde{\boldsymbol{n}}}$ would be strictly positive definitive.
    According to \textit{Lemma} \ref{lem:q_cond_0} and the fact that $n_{\text{qu}k}$ and $n_{\text{qu}j}$ are conditionally independent when  $\theta$ is given (cf. \eqref{rt:cond_independ}), we then have $q_{kj}=0$.

 Next, we shall solve the remaining unknown elements in $\boldsymbol{\Sigma}_{\widetilde{\boldsymbol{n}}}$ and $\textbf{Q}_{\widetilde{\boldsymbol{n}}}$ from $\boldsymbol{\Sigma}_{\widetilde{\boldsymbol{n}}}^\text{T}\textbf{Q}_{\widetilde{\boldsymbol{n}}} =\textbf{I}_{K+1}$ and $\textbf{Q}_{\widetilde{\boldsymbol{n}}}^\text{T} =\textbf{Q}_{\widetilde{\boldsymbol{n}}}$, which are equivalent to
 \begin{align}
  c_{kk}q_{kk} +c_{k}q_{k} &= 1, \\
  ~~~ \sum\nolimits_{j=1}^K c_{k}q_{k} + \sigma^2_\theta q_{K+1} & = 1, \\
   c_{kj} q_{jj} +c_k q_j &=0,\\
   c_{j} q_{jj} +\sigma^2_\theta q_j &=0,
 \end{align}
for all $ k\neq j$ and $k,j\in \mathcal{K}$.

With some mathematical manipulations, we have
\begin{equation}
    c_{kj} = \frac{\sigma^2_\theta\sigma^2_{\text{qu}k} \sigma^2_{\text{qu}j}} {(\sigma^2_\theta+\sigma^2_{\text{ob}k})(\sigma^2_\theta+\sigma^2_{\text{ob}j})}.
\end{equation}
Thus, \textit{Proposition} \ref{prop:sigma_n} is proved.

\end{proof}

\subsection{Proof of \textit{Theorem} \ref{th:distortion}} \label{prf:distortion}
\begin{proof}
    Based on \textit{Proposition} \ref{prop:sigma_n}, we rewrite the covariance matrix $\boldsymbol{\Sigma}_{\boldsymbol{n}}$ as
    \begin{equation}
        \boldsymbol{\Sigma}_{\boldsymbol{n}} = \boldsymbol{\Lambda} +b\boldsymbol{u} \boldsymbol{v}^{\text{T}},
    \end{equation}
    in which
    \begin{align}
        \boldsymbol{\Lambda}&=\text{diag}(\lambda'_1, \cdots,\lambda'_K), \\
        \boldsymbol{u} &= \boldsymbol{v} = (u_1,\cdots,u_K)^{\text{T}},\\
        \lambda'_k & = \sigma^2_{\text{ob}k}  + \frac{ \sigma^2_{\text{qu}k}}
                         {(\sigma^2_\theta + \sigma^2_{\text{ob}k})^2}(\sigma^4_\theta - \sigma^4_{\text{ob}k}-\sigma^2_\theta \sigma^2_{\text{qu}k})\\
                  & = \frac{\sigma^2_\theta\gamma_{\text{ch}k}(1+\gamma_{\text{ch}k} + \gamma_{\text{ob}k})}
                  {(1+\gamma_{\text{ch}k})^2 \gamma_{\text{ob}k} }, \\
        b&=\sigma^2_\theta.
    \end{align}

    By using the following equation \cite[Chap.~1.7, (1.7.12)]{Xianda-Matrix-2005}
    \begin{equation}
        (\textbf{A}+b\boldsymbol{u} \boldsymbol{v}^{\text{T}} )^{-1} = \textbf{A}^{-1} - \frac{b}{1+ b \boldsymbol{v}^{\text{T}}  \textbf{A}^{-1}\boldsymbol{u} } \textbf{A}^{-1} \boldsymbol{u} \boldsymbol{v}^{\text{T}} \textbf{A}^{-1},
    \end{equation}
    we then have
    \begin{align}\label{apx:sigma_11}
        \boldsymbol{\Sigma}_{\boldsymbol{n}}^{-1} =& \boldsymbol{\Lambda}^{-1} - \frac{b}{1+ b \boldsymbol{v}^{\text{T}}  \boldsymbol{\Lambda}^{-1}\boldsymbol{u} } \boldsymbol{\Lambda}^{-1} \boldsymbol{u} \boldsymbol{v}^{\text{T}} \boldsymbol{\Lambda}^{-1} \\
        \nonumber
        =& \text{diag}\left(\frac{1}{\lambda'_1}, \cdots, \frac{1}{\lambda'_K}\right)
        - \frac{\sigma^2_\theta}{1+ \sigma^2_\theta \sum_{k=1}^K \frac{u_k^2} {\lambda'_k} }\\
        \label{apx:sigma_13}
            & \qquad \cdot \left(
             \begin{array}{cccc}   
                \frac{u_1^2}{{\lambda'_1}^2} & \frac{u_1}{\lambda'_1}\frac{u_2}{\lambda'_2} & \cdots & \frac{u_1}{\lambda'_1}\frac{u_K}{\lambda'_K} \\  
                \frac{u_2}{\lambda'_2}\frac{u_1}{\lambda'_1} & \frac{u_2^2}{{\lambda'_2}^2} & \cdots & \frac{u_2}{\lambda'_2}\frac{u_K}{\lambda'_K} \\  
                \vdots&\ddots&\cdots&\vdots\\
                \frac{u_K}{\lambda'_K}\frac{u_1}{\lambda'_1} &\frac{u_K}{\lambda'_K}\frac{u_2}{\lambda'_2}  & \cdots & \frac{u_K^2} {{\lambda'_K}^2} 
            \end{array}
        \right).
    \end{align}

Together with \eqref{eq:mmse_blue}, the minimum achievable distortion can be obtained as
\begin{align}
    D = & (\textbf{1}_K^\text{T} \boldsymbol{\Sigma}_{\boldsymbol{n}}^{-1} \textbf{1}_K)^{-1}\\
        = & \sum_{k=1}^K \frac{1}{\lambda'_k} - \frac{\sigma^2_\theta}{1+ \sigma^2_\theta \sum_{k=1}^K \frac{u_k^2}{ \lambda'_k} }
               \sum_{k=1}^K \sum_{j=1}^K \left(\frac{u_k}{\lambda'_k}\frac{u_j}{\lambda'_j} \right) \\
        = & \sigma^2_\theta\left(\sum_{k=1}^K \frac{1}{\lambda_k} - \frac{1}{1+ \sum_{k=1}^K \frac{u_k^2}{ \lambda_k} }
                \left(\sum_{k=1}^K\frac{u_k}{\lambda_k} \right)^2\right)^{-1},
\end{align}
where $\lambda= \lambda'/\sigma^2_\theta$.
This completes the proof of \textit{Theorem} \ref{th:distortion}.
\end{proof}


\subsection{Proof of \textit{Theorem} \ref{th:homo_digital}} \label{prf:homo_digital}
\begin{proof}
In a homogenous sensing system, $u_k$ and $\lambda_k$ (cf. \eqref{rt:u_k}, \eqref{rt:lamda_k}) are equal for each node.
    That is,
    \begin{equation}
        u = \frac{1}{1+\gamma_{\text{ch}}} ~\text{and}~
        \lambda  = \frac{\gamma_{\text{ch}}(1+\gamma_{\text{ch}} + \gamma_{\text{ob}})}
                  {(1+\gamma_{\text{ch}})^2 \gamma_{\text{ob}} }.
    \end{equation}

    Based on \textit{Theorem} \ref{th:distortion}, we then have
\begin{align}
    \sigma^2_\theta D_\text{Coded}^{-1}&= \sum_{k=1}^K \frac{1}{\lambda_k} - \frac{1}
                {1+  \sum_{k=1}^K \frac{u_k^2}{ \lambda_k} }
                \left(\sum_{k=1}^K\frac{u_k}{\lambda_k} \right)^2 \\
              &= \frac{K}{\lambda} - \frac{1}{1+ K \frac{u^2}{ \lambda} } \cdot
                \frac{K^2u^2}{\lambda^2}  \\
              &=\frac{K}{\lambda + K  u^2}.
\end{align}

Thus, the minimum achievable distortion is given by
\begin{align}
    D_\text{Coded} = &\sigma^2_\theta\left(\frac{\lambda}{K} +  u^2 \right) \\
            =&\frac{\sigma^2_\theta}{K}
             \left( \frac{\gamma_{\text{ch}}}  {(1+\gamma_{\text{ch}}) \gamma_{\text{ob}} }
                        + \frac{K+\gamma_{\text{ch}}}  {(1+\gamma_{\text{ch}})^2}
             \right).
\end{align}

This completes the proof of \textit{Theorem} \ref{th:homo_digital}.
\end{proof}

\subsection{Proof of \textit{Theorem} \ref{th:comparison}} \label{prf:comparison}
\begin{proof}
    The difference between $D_{\text{Coded}}$ (cf. \textit{Theorem} \ref{th:homo_digital}) and $D_{\text{Uncoded}}$ (cf. \textit{Proposition} \ref{prop:homo_af}) can be expressed as
    \begin{align}
        \Delta D =& D_{\text{Coded}}-D_{\text{Uncoded}} \\
                \nonumber
            =& \frac{\sigma^2_\theta}{K\gamma_{\text{ch}} (1+\gamma_{\text{ch}})^2 }
                    \Big ( (K-2)\gamma_{\text{ch}} - 1 \\
                    \label{apx:delta_D}
            &    \hspace{25.3mm} - \frac{1}{\gamma_{\text{ob}}} (\gamma_{\text{ch}}+1) (2\gamma_{\text{ch}}+1)
                    \Big).
    \end{align}

    For $K=1$ and $K=2$, it is clear that $\Delta D<0$, and thus the coded scheme outperforms the uncoded scheme.

    For $K\geq 3$, we note that $\Delta D<0$ is equivalent to
    \begin{equation}
        (K-2)\gamma_{\text{ch}} - 1 - \frac{1}{\gamma_{\text{ob}}} (\gamma_{\text{ch}}+1) (2\gamma_{\text{ch}}+1)<0.
    \end{equation}

    Since $\gamma_{\text{ch}}$ is positive, we then have
    \begin{equation}\label{apx:dr_77}
    \frac{(K-2)\gamma_{\text{ch}}-1}{(\gamma_{\text{ch}}+1)(2\gamma_{\text{ch}}+1)} < \frac{1}{\gamma_{\text{ob}}}.
    \end{equation}

    First, it is noted that if
    \begin{equation}\label{apx:dr_79}
    \gamma_{\text{ch}}<1/(K-2),
    \end{equation}
     is satisfied, \eqref{apx:dr_77} would be true for all $\gamma_{\text{ob}}>0$.

    Second, if $\gamma_{\text{ch}}\geq 1/(K-2)$, \eqref{apx:dr_77} can be expressed as
    \begin{equation}\label{apx:dr_78}
    \gamma_{\text{ob}} < \frac{(\gamma_{\text{ch}}+1)(2\gamma_{\text{ch}}+1)} {(K-2)\gamma_{\text{ch}}-1}.
    \end{equation}

    By combining \eqref{apx:dr_79} and \eqref{apx:dr_78}, the proof of \textit{Theorem} \ref{th:comparison} would be completed.
\end{proof}

\subsection{Proof of \textit{Theorem} \ref{th:condition_hetero}} \label{prf:condition_hetero}
\begin{proof}
    Using the result given by \eqref{eq:distortion_af_cui} and \cite{Cui-Estimation-2007}, the distortion of an uncoded heterogeneous sensing system can be obtained as
    \begin{equation} \label{apx:d_uncod_heto}
        D_{\text{Uncoded}}^{\text{hetero}} = \sigma^2_\theta \left( \sum_{k=1}^K \frac{1}
        { \frac{1}{\gamma_{\text{ob}k}} + \frac{1}{\gamma_{\text{ch}k}} + \frac{1} {\gamma_{\text{ob}k}\gamma_{\text{ch}k}} }
        \right)^{-1}.
    \end{equation}

    Next, we are interested in the sign of the following difference between the inverse distortions
    \begin{equation} \label{apx:diff_heter}
        \Delta_{\text{hetero}} = \sigma^2_\theta D^{-1}_{\text{Coded}} - \sigma^2_\theta D^{-1}_{\text{Uncoded}},
    \end{equation}
     where $D_{\text{Coded}}$ is given by \eqref{rt:distortion} and $D_{\text{Uncoded}}$ is given by \eqref{apx:d_uncod_heto}.

    With some mathematical manipulations, we see that $\Delta_{\text{hetero}}$ is equal to
    \begin{align}
    \nonumber
    & \frac{1} { 1+\sum_{k=1}^K
    \frac{\gamma_{\text{ob}k}}{(1+\gamma_{\text{ch}k}+\gamma_{\text{ob}k})\gamma_{\text{ch}k}} }
    \left( \sum_{k=1}^K
                    \frac{ (1+2\gamma_{\text{ch}k})\gamma_{\text{ob}k}}
                      {(1+\gamma_{\text{ch}k}+\gamma_{\text{ob}k})\gamma_{\text{ch}k} }
    \right. \\
    \nonumber
    & + \sum_{k=1}^K\frac{ (1+2\gamma_{\text{ch}k})\gamma_{\text{ob}k}}
                      {(1+\gamma_{\text{ch}k}+\gamma_{\text{ob}k})\gamma_{\text{ch}k} }
                      \sum_{k=1}^K\frac{ \gamma_{\text{ob}k}}
                      {(1+\gamma_{\text{ch}k}+\gamma_{\text{ob}k})\gamma_{\text{ch}k} }   \\
    \label{apx:equivalence}
    & \left. - \left( \sum_{k=1}^K
                            \frac{ (1+\gamma_{\text{ch}k})\gamma_{\text{ob}k}}
                            {(1+\gamma_{\text{ch}k}+\gamma_{\text{ob}k})\gamma_{\text{ch}k} }
                    \right)^2
      \right).
    \end{align}

    Note that $\Delta_{\text{hetero}}$ has the same sign with the equation within the bracket in equation \eqref{apx:equivalence}.
        With some mathematical manipulations, we finally see that $\Delta_{\text{hetero}}>0$ if
        \begin{equation}
        \sum_{k=1}^{K} \frac{ (1+2\gamma_{\text{ch}k})\gamma_{\text{ob}k}}
        {(1+\gamma_{\text{ch}k}+\gamma_{\text{ob}k})\gamma_{\text{ch}k}}
        - \left( \sum_{k=1}^{K} \frac{ \gamma_{\text{ob}k}}
        {1+\gamma_{\text{ch}k}+\gamma_{\text{ob}k}}
        \right)^2 >0,
        \end{equation}
    which completes the proof of  \textit{Theorem} \ref{th:condition_hetero}.
\end{proof}

\subsection{Proof of \textit{Theorem} \ref{th:distortion_hybrid}} \label{prf:distortion_hybrid}
\begin{proof}
    Without loss of generality, we assume that the first $K_1$ nodes use the coded scheme and the rest ones use the uncoded scheme.
        We denote the signal used to estimate $\theta$ at the fusion center as $\boldsymbol{x}=\textbf{1}_K \theta + \boldsymbol{n}_{\text{h}} $, in which $\boldsymbol{n}_{\text{h}}$ is composed of the noise $\boldsymbol{n}_\text{{c}}$ from nodes using the coded scheme and the noise $\boldsymbol{n}_\text{{u}}$ from nodes using the uncoded scheme, i.e., $\boldsymbol{n}_{\text{h}} =[\boldsymbol{n}_\text{{c}}^{\text{T}}, \boldsymbol{n}_\text{{u}}^{\text{T}} ]^{\text{T}}$.
    Since $\boldsymbol{n}_\text{{c}}$ and  $\boldsymbol{n}_\text{{u}}$ are noises of nodes from disjoint sets, they must be independent from each other.
        By using a BLUE estimator, the estimation of $\theta$ is obtained from  $\hat{\theta} = \boldsymbol{f}^\text{T} \boldsymbol{x}$ and the corresponding distortion is given by (cf. \eqref{eq:mmse_blue})
        \begin{equation}\label{apx:d_heter}
    D_{\text{Hybrid}} =  (\textbf{1}_K^\text{T} \boldsymbol{\Sigma}_{\text{h}}^{-1} \textbf{1}_K)^{-1},
\end{equation}
        in which $\boldsymbol{\Sigma}_{\text{h}}$ is the covariance of noise vector $\boldsymbol{n}_{\text{h}}$.
We further express $\boldsymbol{\Sigma}_{\text{h}}$ as
\begin{equation}
    \boldsymbol{\Sigma}_{\text{h}} =
    \left[
      \begin{array}{cc}   
        \boldsymbol{\Sigma}_{\text{cc}} & \boldsymbol{\Sigma}_{\text{cu}} \\  
        \boldsymbol{\Sigma}_{\text{uc}} & \boldsymbol{\Sigma}_{\text{uu}} \\  
      \end{array}
    \right],
\end{equation}
in which $\boldsymbol{\Sigma}_{\text{cc}}$ is the covariance matrix of $\boldsymbol{n}_\text{{c}}$,
$\boldsymbol{\Sigma}_{\text{uu}}$ is the covariance matrix of $\boldsymbol{n}_\text{{u}}$,
$\boldsymbol{\Sigma}_{\text{cu}}$ and $\boldsymbol{\Sigma}_{\text{uc}}$ are cross correlation matrices between $\boldsymbol{n}_\text{{c}}$ and $\boldsymbol{n}_\text{{u}}$.

First, $\boldsymbol{\Sigma}_{\text{cc}}$ can readily be obtained from \textit{Proposition} \ref{prop:sigma_n} and its inverse matrix $\boldsymbol{\Sigma}_{\text{cc}}^{-1}$ can be obtained in a similar way as \textit{Appendix} \ref{prf:distortion} (cf. \eqref{apx:sigma_11}, \eqref{apx:sigma_13}).

    Second, since $\boldsymbol{n}_\text{{c}}$ is independent from $\boldsymbol{n}_\text{{u}}$, we have
\begin{equation}
    \boldsymbol{\Sigma}_{\text{cu}}=\textbf{0}_{K_1\times K_0} ~\text{and}~\boldsymbol{\Sigma}_{\text{uc}}=\textbf{0}_{K_0\times K_1},
\end{equation}
which means that the inverse matrix of $\boldsymbol{\Sigma}_{\text{h}} $ would be
\begin{equation}\label{apx_E_2}
    \boldsymbol{\Sigma}_{\text{h}}^{-1} =
    \left[
      \begin{array}{cc}   
        \boldsymbol{\Sigma}_{\text{cc}}^{-1} & \boldsymbol{0} \\  
        \boldsymbol{0} & \boldsymbol{\Sigma}_{\text{uu}}^{-1} \\  
      \end{array}
    \right].
\end{equation}

According to \cite{Cui-Estimation-2007} and \eqref{eq:distortion_af_cui}, we further have $\boldsymbol{\Sigma}_{\text{uu}}= \text{diag}\{d_1, d_2, \cdots, d_{K_0}\}$, in which
\begin{equation}\label{apx_E_3}
        d_{k} =  \frac{1}{\gamma_{\text{ob}k}}
                    + \frac{1}{\gamma_{\text{ch}k}}
                    + \frac{1}{\gamma_{\text{ob}k}\gamma_{\text{ch}k}}.
\end{equation}

By combining \eqref{apx:d_heter}, \eqref{apx:sigma_11}, \eqref{apx:sigma_13}, \eqref{apx_E_2}, and \eqref{apx_E_3}, the proof of \textit{Theorem} \ref{th:distortion_hybrid} would be completed.

\end{proof}


\small{
\bibliographystyle{IEEEtran}

}

\end{document}